\newcommand{\be}{\begin{equation}}
\newcommand{\ee}{\end{equation}}
\newcommand{\bea}{\begin{eqnarray}}
\newcommand{\eea}{\end{eqnarray}}
\def\XXint#1#2#3{{\setbox0=\hbox{$#1{#2#3}{\int}$}
     \vcenter{\hbox{$#2#3$}}\kern-.5\wd0}}
\def\doi{http://dx.doi.org/}
\let\OLDthebibliography\thebibliography
\renewcommand\thebibliography[1]{
  \OLDthebibliography{#1}
  \setlength{\parskip}{0pt}
  \setlength{\itemsep}{0pt plus 0.3ex}
}
\newcommand{\cg}[1]{\bar{#1}}
\newcommand{\sign}{\,\text{sgn}\,}
\newcommand{\Tr}{\,\text{tr}\,}
\newcommand{\1}{\,\pmb{1}}
\newcommand{\D}[1]{\text{d}#1}
\newcommand{\intg}[1]{\lfloor #1\rfloor}
\newtheorem{theorem}{Theorem}
\newtheorem{lemma}{Lemma}
\begin{document}
\begin{center}
{\Large\bf Wavelet representation of hardcore bosons}
\end{center}
\begin{center}
Etienne Granet
\end{center}
\begin{center}
 Kadanoff Center for Theoretical Physics, University of Chicago, 5640 South Ellis Ave, Chicago, IL 60637, USA\\
\end{center}
\date{\today}

\section*{Abstract}
{\bf
We consider the 1D Tonks-Girardeau gas with a space-dependent potential out of equilibrium. We derive the exact dynamics of the system when divided into $n$ boxes and decomposed into energy eigenstates within each box. It is a representation of the wave function that is mixed between real space and momentum space, whose basis elements are plane waves localized in a box, motivating the word ``wavelet". In this representation we derive the emergence of generalized hydrodynamics in appropriate limits without assuming local relaxation. We emphasize in particular that a generalized hydrodynamic behaviour emerges in a high-momentum and short-time limit, besides the more common large-space and late-time limit, which is akin to a semi-classical expansion. In this limit, conserved charges \emph{do not} require a large number of particles to be described by generalized hydrodynamics. Besides, we show that this wavelet representation provides an efficient numerical algorithm for a complete description of out-of-equilibrium dynamics of hardcore bosons.
}

\section{Introduction}
Computing dynamics in many-body quantum systems is a difficult task even in systems that can be diagonalized exactly, since it involves going from a real space basis (to specify the initial state or to compute expectation values) to an energy eigenstate basis (to time-evolve the state). 
In several areas of physics, \emph{hydrodynamics} is known to provide a good approximation of the dynamics of a many-body system in certain conditions \cite{bastianello2022introduction,ollitrault2008relativistic,cen1992hydrodynamic,marchetti2013hydrodynamics}. Hydrodynamics postulates that the system can be divided into ``fluid cells" that are relaxed to an equilibrium state, which in a microcanonical ensemble (or generalized microcanonical if there are several conserved charges) can be taken to be an energy eigenstate. At a larger scale, the system is thus described by equilibrium parameters (particle density, energy density, conserved charges, etc) that vary with space and time. Hydrodynamics is thus an approximate representation of a state that is mixed between real space (through the division into fluid cells) and energy eigenstates (through the description in terms of equilibrium parameters within each fluid cell). This mixed representation is key to its efficiency, as the real space resolution allows for a simple encoding of the initial state, while the energy eigenstate resolution allows for describing complex states locally with few parameters.

Recently, this hydrodynamic description has proven to be particularly efficient to describe 1D quantum integrable models \cite{castro2016emergent,bertini2016transport}, where due to the infinite number of conservation laws it is called ``generalized" hydrodynamics (GHD). A paradigmatic example is a gas of bosons in $\delta$ interaction, called the Lieb-Liniger model \cite{lieb1963exact}, which is known to provide an excellent description of cold atoms confined in a 1D geometry \cite{kinoshita2005local,haller2009realization,fabbri2011momentum,van2008yang,jacqmin2011sub,armijo2011mapping,schemmer2019generalized}. The limit of infinite coupling is particularly studied and is called the Tonks-Girardeau gas of hardcore bosons \cite{tonks1936complete,girardeau1960relationship}. But although exactly solvable with the Bethe ansatz, computing dynamics analytically or numerically in the Lieb-Liniger model remained a challenge for a long time. Contrary to the Bethe ansatz solution, this hydrodynamics description is amenable to efficient numerical implementation \cite{doyon2018soliton,moller2020introducing,bastianello2019generalized,bastianello2020thermalization,bulchandani2017solvable,moller2021extension}, yielding useful comparison material that was unavailable before. Besides its convenient practical use, it has also been reported to match remarkably well experimental \cite{schemmer2019generalized,kinoshita2006quantum,malvania2021generalized} and numerical data \cite{bertini2016transport,bulchandani2018bethe,doyon2017large,bastianello2019generalized}, see \cite{bouchoule2022generalized} for a review.

In this paper, inspired by the efficiency of hydrodynamics, we introduce an exact representation of hardcore bosons that is mixed between real space and momentum space. We explicitly implement the division of the system into fluid cells, and decompose a state in the basis given by the tensor product of energy eigenstates within each fluid cell. As the basis elements of this decomposition are small plane waves localized in each fluid cell, we call this decomposition a ``wavelet" representation. Contrary to a coarse-graining or a hydrodynamic description, we do \emph{not} remove any degree of freedom. Using the Girardeau mapping of hardcore bosons to free fermions \cite{girardeau1960relationship}, we derive the exact dynamics of the model in this wavelet representation. 

The output of this work is twofold. Firstly, (i) we show the emergence of (generalized) hydrodynamics in certain limits in absence of potentials, without assuming local relaxation of the gas within the fluid cells. The usual limit in which hydrodynamics holds is the \emph{Euler scale} limit, namely \emph{large space and large time} at fixed ratio space over time. In this scaling each fluid cell becomes infinitely large. But we show in particular that a hydrodynamic behaviour also emerges in a \emph{short-time, high-momentum} limit, which does not require any space rescaling. This is akin to a semi-classical expansion $\hbar\to 0$. In this limit, every particle with momentum $\alpha$ initialised in a fluid cell moves ``classically" from fluid cell to fluid cell with velocity $2\alpha$. In this scaling each fluid cell becomes infinitely small. Importantly, several observables (like particle density) \emph{do not} require a large number of particles to be described by GHD, which is compatible with observations that GHD works for very low numbers of particles \cite{schemmer2019generalized}. Despite this observation being very simple, it is appealing since much more physically applicable to cold-atom gases, and does not seem to have been observed elsewhere. To the best of our knowledge, previous mentions of the limit $\hbar\to 0$ in the context of GHD were always accompanied with the thermodynamic limit $N\sim 1/\hbar\to\infty$ \cite{ruggiero2020quantum,ruggiero2019conformal}.

Secondly, (ii) we show that this wavelet representation provides an efficient way of simulating hardcore bosons with inhomogeneous fields. Assuming that the initial state is a tensor product of energy eigenstates on each fluid cell, the state possesses a determinant structure at all times in the wavelet decomposition which can be computed efficiently. As said above, this representation is mixed between real space (through the division into fluid cells) and energy eigenstates space (through the decomposition into the local eigenstates in each fluid cell). This wavelet representation allows for a truncation of the higher momenta in each fluid cell, which is a physically meaningful truncation. We find that keeping a very small number of momenta in each fluid cell already gives very good results. Besides, this wavelet representation allows for a simple formula for the single-particle Green's function, which is usually difficult to compute \cite{wilson2020observation,rigol2005fermionization,papenbrock2003ground,minguzzi2002high,girardeau2001ground,pezer2007momentum}. As an illustration, we present simulations of quantum-Newton-cradle-like protocols for hardcore bosons with different confining potentials \cite{van2016separation,peotta2014quantum,atas2017exact}.\\

Some comments about GHD are in order, the hydrodynamic theory that describes the Lieb-Liniger model \cite{castro2016emergent,bertini2016transport}. Hydrodynamics is here ``generalized" in the sense that the Lieb-Liniger model being integrable, it has several other conserved charges beyond particle density, momentum and energy, encoded in a generating function $\rho(\lambda)$ called root density. The derivation of hydrodynamics in physical systems from first principles is generally a difficult task. The standard approach to argue for GHD is to divide the system into a collection of fluid cells assumed to be at equilibrium  \cite{doyon2020lecture,essler2022short}. Then, one computes the expectation value of the current of the conserved charges that parametrize the equilibrium state \cite{castro2016emergent,bertini2016transport,vu2019equations,urichuk2019spin,borsi2020current}. Popular toy models for these currents between fluid cells are bi-partitioning protocols \cite{antal1999transport,de2013nonequilibrium,collura2014quantum,viti2016inhomogeneous,perfetto2017ballistic,jin2021interplay}. This current leads to a change in the conserved quantities of the neighbouring fluid cells, yielding partial differential equations holding at a much larger scale than that of the fluid cells. Although intuitive, the drawback of this approach is that it entirely relies on the assumption of local equilibrium in each fluid cell, which is the difficult point to show. We also note that observations of large spatial correlations have  challenged the fluid cell picture recently \cite{doyon2022emergence}. 

The hardcore boson limit of the Lieb-Liniger model is particularly studied for ist mapping to free fermions \cite{girardeau1960relationship,tonks1936complete}. Finite couplings can then be studied perturbatively around free fermions \cite{cheon1999fermion,brand2005dynamic,cherny2006polarizability,granet2020systematic,granet2021systematic,granet2022duality}. For free fermions, any correlation function can be expressed in terms of the Wigner function $W(x,\lambda)$ through Wick's theorem. And this Wigner function straightforwardly satisfies a hydrodynamic-like equation holding at finite $x,t$ \cite{bettelheim2006orthogonality,bettelheim2008quantum,bettelheim2011universal,essler2022short, erdHos2004quantum}
\begin{equation}\label{ghdlike}
    \partial_t W(x,\lambda)+2\lambda \partial_x W(x,\lambda)=0\,.
\end{equation}
Moreover analogous continuity equations can be obtained for the Lieb-Liniger model perturbatively at large coupling around the free fermion limit \cite{bertini2022bbgky}. This equation is reminiscent to the GHD equation for the space-time dependent root density in the hardcore boson limit \cite{castro2016emergent,bertini2016transport}
\begin{equation}
    \partial_t \rho_{x,t}(\lambda)+2\lambda \partial_x \rho_{x,t}(\lambda)=0\,.
\end{equation}
Besides, the integration of $W(x,\lambda)$ over $x$ in an equilibrium state yields the root density $\rho(\lambda)$ of this equilibrium state. But to go from $W(x,\lambda;t)$ satisfying \eqref{ghdlike} to an equation on a local root density $\rho_{x,t}(\lambda)$ is more subtle and is where lies hydrodynamics. It would require again a coarse-graining, a separation of scale and an assumption of local relaxation \cite{essler2022short}, and would only hold in certain limits. For finite $x,t$, although \eqref{ghdlike} holds true, the expression for some observables at equilibrium in terms of $\rho(\lambda)$ would \emph{not} be given by the expression in terms of $W(y,\mu;t)$ obtained with Wick's theorem. This expression would have ``unwanted" terms that vanish only in certain limits.\\

The paper is organized as follows. In Section \ref{problemsetting} we introduce the model, the out-of-equilibrium setup and the wavelet basis. In Section \ref{exact} we derive the representation of the state of the system out-of-equilibrium in this wavelet basis. In Section \ref{expe} we explain how to compute expectation values in this representation. In Section \ref{hydrosec} we define the hydrodynamic limit that we consider, and derive a hydrodynamic behaviour in our setup. 
 In Section \ref{algosec} we present a numerical algorithm based on this wavelet for simulating the dynamics of hardcore bosons with inhomogeneous potentials.

\section{Problem setting \label{problemsetting}}
\subsection{Lieb-Liniger model and hardcore bosons\label{hardcore}}
We consider the Lieb-Liniger model on a ring of size $L$
\begin{equation}
H=\int_0^L \left[ -\phi^\dagger(x) \partial_x^2 \phi (x)+c\phi^\dagger(x)\phi^\dagger(x) \phi(x)\phi(x)\right]\D{x}\,,
\end{equation}
with a Bose field $\phi(x)$ and periodic boundary conditions. We will be interested in the limit $c\to\infty$ where the bosons become hardcore and map to free fermions \cite{girardeau1960relationship}. The normalized eigenfunctions in this limit $c\to\infty$ are parametrized by a set of distinct quantized momenta $\pmb{\lambda}=\{\lambda_1,...,\lambda_N\}\subset K$ with if $N$ is even \cite{korepin1997quantum}
\begin{equation}\label{kbar}
   K=\left\{\frac{2\pi (m+\tfrac{1}{2})}{L},\quad m\in\mathbb{Z}\right\}\,,
\end{equation}
and if $N$ is odd
\begin{equation}
   K=\left\{\frac{2\pi m}{L},\quad m\in\mathbb{Z}\right\}\,.
\end{equation}
For simplicity and without loss of generality, we will restrict to $N$ even for the rest of the paper, and take $K$ as defined in \eqref{kbar}. The explicit expression of the normalized eigenstates is 
\begin{equation}
|\pmb{\lambda}\rangle=\int_0^L...\int_{0}^L\D{x_1}...\D{x_N} \chi(x_1,...,x_N)\phi^\dagger(x_1)...\phi^\dagger(x_N)|0\rangle\,,
\end{equation}
with when $\lambda_1<...<\lambda_N$
\begin{equation}\label{chi}
\chi(x_1,...,x_N)=\frac{1}{N!L^{N/2}}\sum_{\sigma\in\mathfrak{S}_N}(-1)^\sigma \exp\left(i\sum_{j=1}^N x_j\lambda_{\sigma j} \right) \prod_{i<j}\sign(x_i-x_j)\,.
\end{equation}
Their energy is
\begin{equation}
    E(\pmb{\lambda})=\sum_{j=1}^N \lambda_j^2\,.
\end{equation}

\subsection{Wavelet representation}
We divide the entire system $[0,L]$ into $n$ boxes $[(b-1) \tfrac{L}{n},b\tfrac{L}{n}]$ of equal size $L/n$, for $b=1,...,n$, as depicted in Fig \ref{kadanoff}. We can define in each box $b$ the same Hamiltonian $H_0$ restricted to this box, and set anti-periodic boundary conditions for even number of particles, and periodic boundary conditions for odd number of particles. The eigenstates of this Hamiltonian are parametrized by sets of distinct momenta $\pmb{\alpha^{(b)}}\subset K_{\rm box}$ with
\begin{equation}\label{k}
  K_{\rm box}= \left\{\frac{2\pi nm}{L},\quad m\in\mathbb{Z}\right\}\,,
\end{equation}
irrespectively of the parity of the number of particles, and have identical representations as in \eqref{chi}. The particular choice of boundary conditions is purely for technical convenience, so that elements of $K_{\rm box}$ can never be equal to elements of $K$ in \eqref{kbar}, and will not influence the physics. The important point is that they form a basis of the Hilbert space in box $b$, and that local observables can be expressed simply in terms of them. We note the factor $n$ compared to \eqref{kbar}, coming from the change of system size. 

As these states form a basis of the Hilbert space of wave functions on box $b$, a basis of the entire Hilbert space on $[0,L]$ is obtained with tensor products of theses states in each box. Namely, given $\pmb{\alpha^{(b)}}\subset K_{\rm box}$ for $b=1,...,n$ we introduce
\begin{equation}
|\pmb{\cg{\alpha}}\rangle=|\pmb{\alpha^{(1)}}\rangle \otimes...\otimes |\pmb{\alpha^{(n)}}\rangle\,,
\end{equation}
with by definition $\pmb{\cg{\alpha}}\subset K_\otimes$ where
\begin{equation}
    K_\otimes= \left\{\left(\frac{2\pi nm}{L},b\right),\quad m\in\mathbb{Z}\,,\quad b\in\{1,...,n\}\right\}\,.
\end{equation}
We will denote $\cg{\alpha}\in K_\otimes$ the couple $\cg{\alpha}=(\frac{2\pi nm}{L},b)$, with $\alpha=\frac{2\pi nm}{L}\in K_{\rm box}$ the momentum and $b(\cg{\alpha})=b\in\{1,...,n\}$ the box index. We will call $|\pmb{\cg{\alpha}}\rangle$ a \emph{wavelet} state. This term is motivated by the fact that $\pmb{\cg{\alpha}}$ is a collection of ``wavelets", i.e.\ small plane waves localized in each box. Explicitly, the wave function of the wavelet state $|\pmb{\cg{\alpha}}\rangle$ is
\begin{equation}\label{chiwavelet}
\begin{aligned}
&\chi(x_1,...,x_N)=\frac{n^{N/2}}{L^{N/2}\prod_{b=1}^nN_b!}\prod_{b=1}^n\Bigg[\sum_{\sigma\in\mathfrak{S}_{N_b}} (-1)^\sigma \exp\left(i\sum_{j=1}^{N_b} x_{N_1+...+N_{b-1}+j}\alpha^{(b)}_{\sigma j} \right)\\
&\prod_{i<j}\sign(x_{N_1+...+N_{b-1}+i}-x_{N_1+...+N_{b-1}+j})\prod_{j=1}^{N_b}\1_{\tfrac{b-1}{n}L<x_{N_1+...+N_{b-1}+j}<\tfrac{b}{n}L}\Bigg]\,,
\end{aligned}
\end{equation}
with $N_b$ denoting the number of particles in each $\pmb{\alpha^{(b)}}$, with $N_1+...+N_n=N$.

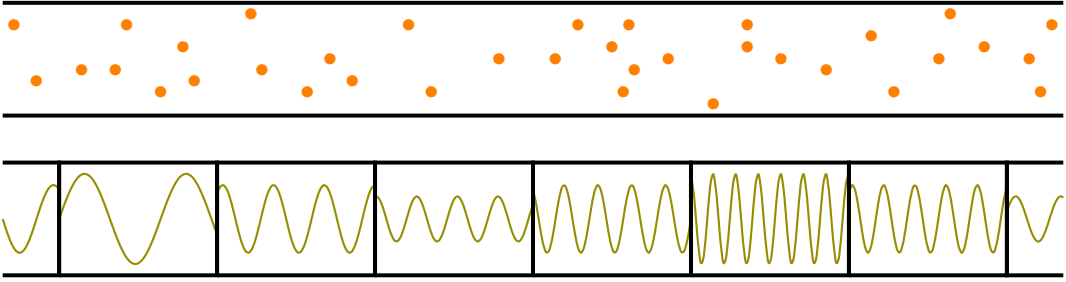
\begin{figure}[H]
\begin{center}
\begin{tikzpicture}[scale=1.5]

\draw[black, line width=1.5pt] (-0.5,1)--(6*1.4+0.5,1);
\draw[black, line width=1.5pt] (-0.5,0)--(6*1.4+0.5,0);

\draw node[orange] at (-0.2,0.3) { $\bullet$};
\draw node[orange] at (-0.4,0.8) { $\bullet$};

\draw node[orange] at (0.2,0.4) { $\bullet$};
\draw node[orange] at (0.6,0.8) { $\bullet$};
\draw node[orange] at (1.1,0.6) { $\bullet$};
\draw node[orange] at (0.9,0.2) { $\bullet$};
\draw node[orange] at (1.2,0.3) { $\bullet$};
\draw node[orange] at (0.5,0.4) { $\bullet$};

\draw node[orange] at (1.4+0.4,0.4) { $\bullet$};
\draw node[orange] at (1.4+0.8,0.2) { $\bullet$};
\draw node[orange] at (1.4+1.2,0.3) { $\bullet$};
\draw node[orange] at (1.4+1.0,0.5) { $\bullet$};
\draw node[orange] at (1.4+0.3,0.9) { $\bullet$};

\draw node[orange] at (2*1.4+1.1,0.5) { $\bullet$};
\draw node[orange] at (2*1.4+0.5,0.2) { $\bullet$};
\draw node[orange] at (2*1.4+0.3,0.8) { $\bullet$};

\draw node[orange] at (3*1.4+0.85,0.8) { $\bullet$};
\draw node[orange] at (3*1.4+1.2,0.5) { $\bullet$};
\draw node[orange] at (3*1.4+0.9,0.4) { $\bullet$};
\draw node[orange] at (3*1.4+0.7,0.6) { $\bullet$};
\draw node[orange] at (3*1.4+0.4,0.8) { $\bullet$};
\draw node[orange] at (3*1.4+0.8,0.2) { $\bullet$};
\draw node[orange] at (3*1.4+0.2,0.5) { $\bullet$};

\draw node[orange] at (4*1.4+0.2,0.1) { $\bullet$};
\draw node[orange] at (4*1.4+0.8,0.5) { $\bullet$};
\draw node[orange] at (4*1.4+1.2,0.4) { $\bullet$};
\draw node[orange] at (4*1.4+0.5,0.6) { $\bullet$};
\draw node[orange] at (4*1.4+0.5,0.8) { $\bullet$};

\draw node[orange] at (5*1.4+1.2,0.6) { $\bullet$};
\draw node[orange] at (5*1.4+0.4,0.2) { $\bullet$};
\draw node[orange] at (5*1.4+0.9,0.9) { $\bullet$};
\draw node[orange] at (5*1.4+0.2,0.7) { $\bullet$};
\draw node[orange] at (5*1.4+0.8,0.5) { $\bullet$};

\draw node[orange] at (6*1.4+0.2,0.5) { $\bullet$};
\draw node[orange] at (6*1.4+0.4,0.8) { $\bullet$};
\draw node[orange] at (6*1.4+0.3,0.2) { $\bullet$};

\end{tikzpicture}

\vspace{0.5cm}

\begin{tikzpicture}[scale=1.5]

\draw[scale=1, domain=0.9:1.4,  samples=50,smooth, variable=\x, olive, thick] plot ({-1.4+\x}, {0.5+0.3*sin(600*\x)});
\draw[scale=1, domain=0:1.4, samples=50,smooth, variable=\x, olive, thick] plot ({\x}, {0.5+0.4*sin(400*\x)});
\draw[scale=1, domain=0:1.4,  samples=50,smooth, variable=\x, olive, thick] plot ({1.4+\x}, {0.5+0.3*sin(50+800*\x)});
\draw[scale=1, domain=0:1.4,  samples=50,smooth, variable=\x, olive, thick] plot ({2*1.4+\x}, {0.5+0.2*sin(80+1000*\x)});
\draw[scale=1, domain=0:1.4,  samples=50,smooth, variable=\x, olive, thick] plot ({3*1.4+\x}, {0.5+0.3*sin(120+1200*\x)});
\draw[scale=1, domain=0:1.4, samples=50, smooth, variable=\x, olive, thick] plot ({4*1.4+\x}, {0.5+0.4*sin(100+1800*\x)});
\draw[scale=1, domain=0:1.4,  samples=50,smooth, variable=\x, olive, thick] plot ({5*1.4+\x}, {0.5+0.3*sin(50+1300*\x)});
\draw[scale=1, domain=0:0.5,  samples=50,smooth, variable=\x, olive, thick] plot ({6*1.4+\x}, {0.5+0.2*sin(20+900*\x)});

\draw[black, line width=1.5pt] (-0.5,0)--(0.,0)--(0.,1)--(-0.5,1);
\draw[black, line width=1.5pt] (0,0)--(1.4,0)--(1.4,1)--(0,1)--(0,0);
\draw[black, line width=1.5pt] (1.4,0)--(2*1.4,0)--(2*1.4,1)--(1.4,1)--(1.4,0);
\draw[black, line width=1.5pt] (2*1.4,0)--(3*1.4,0)--(3*1.4,1)--(2*1.4,1)--(2*1.4,0);
\draw[black, line width=1.5pt] (3*1.4,0)--(4*1.4,0)--(4*1.4,1)--(3*1.4,1)--(3*1.4,0);
\draw[black, line width=1.5pt] (4*1.4,0)--(5*1.4,0)--(5*1.4,1)--(4*1.4,1)--(4*1.4,0);
\draw[black, line width=1.5pt] (5*1.4,0)--(6*1.4,0)--(6*1.4,1)--(5*1.4,1)--(5*1.4,0);
\draw[black, line width=1.5pt] (6*1.4+0.5,1)--(6*1.4,1)--(6*1.4,0)--(6*1.4+0.5,0);

\end{tikzpicture}
\caption{Schematic picture of the wavelet decomposition. The original system (top) is divided into boxes (bottom) and described by plane waves with different amplitudes and frequencies within each box.} 
\label {kadanoff}
\end{center}
\end {figure}

\subsection{Inhomogeneities and out-of-equilibrium protocol}
Given the division into boxes defined above, we consider the following Hamiltonian
\begin{equation}
\begin{aligned}
    H(\{v_b\})&=H_0+V\,,\qquad \text{with }V=\sum_{b=1}^n v_b \int_{(b-1) \tfrac{L}{n}}^{b \tfrac{L}{n}} \phi^\dagger(x)\phi(x)\D{x}\,,
\end{aligned}
\end{equation}
where $v_b$ are potentials that are uniform in each box but that can differ from box to box. As it does not complicate the discussion, we will consider a  general setting where the potentials are given an arbitrary time dependence $v_b(t)$. For an initial wave function $|\Psi(0)\rangle$, we propose to study the dynamics induced by this Hamiltonian. Namely, we define
\begin{equation}\label{protocol}
    |\Psi(t)\rangle=U(t)|\Psi(0)\rangle\,,
\end{equation}
where the evolution operator is given by the time-ordered exponential
\begin{equation}
    U(t)=\mathcal{T}\exp \left(-i\int_0^t H(\{v_b(s)\})\D{s}\right)\,.
\end{equation}
We will consider a particular class of initial states $|\Psi(0)\rangle$ in \eqref{protocol}. We will assume that at time $t=0$ the initial state is a wavelet state
\begin{equation}
    |\Psi(0)\rangle=|\pmb{\cg{\alpha}}\rangle\,,
\end{equation}
with $\pmb{\cg{\alpha}}\subset K_\otimes$ fixed.

\section{Exact dynamics \label{exact}}

\subsection{Change of basis}
Given an observable $\mathcal{O}$, its expectation value at time $t$
\begin{equation}
    \langle\mathcal{O}\rangle_t=\langle\Psi(t)|\mathcal{O}|\Psi(t)\rangle
\end{equation}
can be expressed the following way in the wavelet basis
\begin{equation}\label{timeout}
    \langle \mathcal{O}\rangle_t=\sum_{\pmb{\cg{\beta}},\pmb{\cg{\gamma}}\subset K_\otimes} \langle \pmb{\cg{\alpha}}|U(-t)|\pmb{\cg{\beta}}\rangle \langle \pmb{\cg{\beta}}|\mathcal{O}|\pmb{\cg{\gamma}}\rangle \langle \pmb{\cg{\gamma}}|U(t)|\pmb{\cg{\alpha}}\rangle\,.
\end{equation}
The form factors of the time evolution operator $\langle \pmb{\cg{\alpha}}|U(t)|\pmb{\cg{\beta}}\rangle$ are non-trivial, as a tensor product of eigenstates of $H_0$ in each box is not an eigenstate of $H_0$ in the entire system (and even less so with inhomogeneous potentials). To compute these form factors $\langle \pmb{\cg{\alpha}}|U(t)|\pmb{\cg{\beta}}\rangle$, we first compute the overlap between eigenstates of $H_0$ and wavelet states.
\begin{lemma}[Overlaps]\label{overlap}
We have the following overlap between a wavelet state $|\pmb{\cg{\alpha}}\rangle$ and an eigenstate $|\pmb{\lambda}\rangle$ of $H_0$ with same number of particles $N$
\begin{equation}
\begin{aligned}
&\langle \pmb{\lambda}|\pmb{\cg{\alpha}}\rangle=\det_{i,j=1,...,N} \varphi(\lambda_j,\cg{\alpha}_k)\,,
\end{aligned}
\end{equation}
with
\begin{equation}
\begin{aligned}
\varphi(\lambda,\cg{\alpha})\equiv
\frac{i\sqrt{n}}{L} \cdot\frac{1-e^{iL\tfrac{\lambda}{n}}}{\lambda-\alpha}e^{-iL\tfrac{b(\cg{\alpha})}{n}\lambda}\,.
\end{aligned}
\end{equation}
If they have different number of particles, the form factor vanishes.
\end{lemma}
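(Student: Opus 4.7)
The strategy is to evaluate the overlap as an $N$-fold position-space integral and reduce it via Andreief's identity to a determinant whose entries are single-box integrals that precisely equal $\varphi(\lambda,\cg{\alpha})$. Conceptually this is the familiar statement that the inner product of two Slater determinants equals the determinant of the matrix of single-particle overlaps, applied in the Girardeau fermionic representation: the wavelet state is a Slater determinant of wavelet modes, $|\pmb{\lambda}\rangle$ is a Slater determinant of plane waves, and the mismatched particle-number case vanishes automatically from the Fock-space inner product.

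Starting from the bosonic creation-operator definition and using the total symmetry of $\chi_\lambda$, the overlap reads
\begin{equation*}
\langle\pmb{\lambda}|\pmb{\cg{\alpha}}\rangle=N!\int_{[0,L]^N}\overline{\chi_\lambda(x_1,\dots,x_N)}\chi_{\cg{\alpha}}(x_1,\dots,x_N)\,\D{x_1}\cdots\D{x_N}\,.
\end{equation*}
Substituting \eqref{chi} and \eqref{chiwavelet}, I would recognise the Slater-style sums as the determinants $\det_{i,j}(e^{ix_i\lambda_j})$ and $\prod_b\det_{i,j}(e^{iy_{b,i}\alpha^{(b)}_j})$. On the canonical region supporting $\chi_{\cg{\alpha}}$, the full sign product $\prod_{i<j}\sign(x_i-x_j)$ from $\chi_\lambda$ factorises into within-box pieces --- which cancel those of $\chi_{\cg{\alpha}}$ upon squaring --- times a cross-box constant absorbed by the ordering convention on the columns $\cg{\alpha}_k$. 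Introducing the one-particle wavelet modes $\Psi_q(x)=\sqrt{n/L}\,e^{ix\alpha_q}\1_{x\in\mathrm{box}\,b(\cg{\alpha}_q)}$ and the plane waves $\psi_i(x)=e^{ix\lambda_i}/\sqrt{L}$, the box product becomes a single block-diagonal $N\times N$ determinant $\det_{p,q}(\Psi_q(x_p))$ on that region.

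Since the integrand $\det(\overline{\psi_i(x_j)})\det(\Psi_q(x_p))$ is symmetric under simultaneous permutations of the $x_i$, I can extend the integration from the canonical region to $[0,L]^N$ at the cost of a combinatorial factor $N!/\prod_bN_b!$ coming from the number of block-preserving arrangements, and Andreief's identity then produces $N!\det\langle\psi_i|\Psi_q\rangle$. Each matrix entry is the one-dimensional integral $\int_{(b-1)L/n}^{bL/n}e^{ix(\alpha_q-\lambda_i)}\,dx$, and the crucial simplification is that $\alpha_q\in K_{\mathrm{box}}$ forces $e^{iL\alpha_q/n}=1$, so the integral telescopes to exactly $\varphi(\lambda_i,\cg{\alpha}_q)$ once the normalisation of $\Psi_q$ is taken into account. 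Collecting the prefactors $n^{N/2}/L^N$, $1/\prod_bN_b!$, $N!/\prod_bN_b!$, and the factorials from Andreief leaves the bare determinant $\det\varphi(\lambda_j,\cg{\alpha}_k)$; the main obstacle is precisely this bookkeeping of normalisations and signs, which is not deep but requires some care, and the Slater-determinant/Andreief organisation is what makes the various cancellations transparent.
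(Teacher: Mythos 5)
Your proposal is correct and follows essentially the same route as the paper: both reduce the overlap to a determinant of single-particle overlaps whose entries are the one-box integrals $\tfrac{\sqrt{n}}{L}\int_{(b-1)L/n}^{bL/n}e^{ix(\alpha-\lambda)}\,\D{x}=\varphi(\lambda,\cg{\alpha})$, using $e^{iL\alpha/n}=1$ for $\alpha\in K_{\rm box}$, with the same sign-cancellation and particle-number arguments. The only difference is organisational — you invoke the Slater/Andreief overlap formula up front, whereas the paper carries out the permutation reparametrisations by hand before recognising the final sum over $\pi$ as a determinant — and both treatments leave the same residual cross-box sign to convention.
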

\begin{proof}
We have
\begin{equation}
\begin{aligned}
&\langle \pmb{\lambda}|\pmb{\cg{\alpha}}\rangle=\frac{n^{N/2}}{N!L^N\prod_{b=1}^n N_b!}\sum_{\pi\in\mathfrak{S}_{N}}(-1)^\pi \sum_{\sigma_1\in\mathfrak{S}_{N_1}} (-1)^{\sigma_1}...\sum_{\sigma_n\in\mathfrak{S}_{N_n}} (-1)^{\sigma_n}\\
&\int_0^{L/n} \D{x_1}...\D{x_{N_1}}...\int_{(n-1)L/n}^{L} \D{x_{N-N_n+1}}...\D{x_{N}}\int_{0}^{L} \D{x'_{1}}...\D{x'_{N}}\\
& \exp\left(-i\sum_{j=1}^{N} x'_j\lambda_{\pi j} \right)\prod_{i<j}\sign(x'_{i}-x'_{j}) \\
&\prod_{b=1}^n\exp\left(i\sum_{j=1}^{N_{b}} x_{N_1+...+N_{b-1}+j}\alpha^{(b)}_{\sigma_b j} \right)\prod_{i<j}\sign(x_{N_1+...+N_{b-1}+i}-x_{N_1+...+N_{b-1}+j}) \\
&\langle 0| \phi(x_1)...\phi(x_N)\phi^\dagger (x'_1)...\phi^\dagger (x'_N)|0\rangle\,.
\end{aligned}
\end{equation}
The last line imposes a matching $x'_j=x_{\kappa j}$ for $\kappa \in\mathfrak{S}_{N}$, and the two products of signs give a factor $(-1)^\kappa$. Then we write
\begin{equation}
\sum_{j=1}^{N} x'_j\lambda_{\pi j}=\sum_{j=1}^{N} x_j\lambda_{\pi \kappa^{-1}j}\,,
\end{equation}
and set $\pi=\pi' \kappa$ and then $\pi'\to \pi$ to obtain
\begin{equation}
\begin{aligned}
&\langle \pmb{\lambda}|\pmb{\cg{\alpha}}\rangle=\frac{n^{N/2}}{L^N\prod_{b=1}^n N_b!} \sum_{\sigma_1\in\mathfrak{S}_{N_1}} (-1)^{\sigma_1}...\sum_{\sigma_n\in\mathfrak{S}_{N_n}} (-1)^{\sigma_n}\sum_{\pi\in\mathfrak{S}_{N}}(-1)^\pi\\
&\int_0^{L/n} \D{x_1}...\D{x_{N_1}}...\int_{(n-1)L/n}^{L} \D{x_{N-N_n+1}}...\D{x_{N}}\\
& \exp\left(-i\sum_{j=1}^{N} x_j\lambda_{\pi j} \right) \prod_{b=1}^n\exp\left(i\sum_{j=1}^{N_{b}} x_{N_1+...+N_{b-1}+j}\alpha^{(b)}_{\sigma_b j} \right)\,.
\end{aligned}
\end{equation}
We now reparametrize $x_{N_1+...+N_{b-1}+j}$ into $ x_{N_1+...+N_{b-1}+\sigma_bj}$, and $\pi$ into $\pi \sigma_1...\sigma_n$. We obtain
\begin{equation}
\begin{aligned}
&\langle \pmb{\lambda}|\pmb{\cg{\alpha}}\rangle=\frac{n^{N/2}}{L^N}\sum_{\pi\in\mathfrak{S}_{N}}(-1)^\pi\int_0^{L/n} \D{x_1}...\D{x_{N_1}}...\int_{(n-1)L/n}^{L} \D{x_{N-N_n+1}}...\D{x_{N}}\\
& \exp\left(-i\sum_{j=1}^{N} x_j\lambda_{\pi j} \right) \prod_{b=1}^n\exp\left(i\sum_{j=1}^{N_{b}} x_{N_1+...+N_{b-1}+j}\alpha_{j}^{(b)} \right)\,.
\end{aligned}
\end{equation}
We now perform the integrals, and using $\alpha^{(b)}_j\in K$ we obtain
\begin{equation}
\begin{aligned}
&\langle \pmb{\lambda}|\pmb{\cg{\alpha}}\rangle=\frac{n^{N/2}}{L^N}\sum_{\pi\in\mathfrak{S}_{N}}(-1)^\pi
\prod_{j=1}^N \frac{e^{-iL\tfrac{b(\cg{\alpha}_j)}{n}\lambda_{\pi j}}-e^{-iL\tfrac{b(\cg{\alpha}_j)-1}{n}\lambda_{\pi j}}}{i(\alpha_j-\lambda_{\pi j})}\,.
\end{aligned}
\end{equation}
This can be expressed as a determinant
\begin{equation}
\begin{aligned}
&\langle \pmb{\lambda}|\pmb{\cg{\alpha}}\rangle=\det_{j,k=1,...,N} \varphi(\lambda_j,\cg{\alpha}_k)\,,
\end{aligned}
\end{equation}
with $\varphi$ given in the Lemma.
\end{proof}

\subsection{Form factors of the time evolution operator in the wavelet basis}
We now prove the following result.

\begin{theorem}[Form factors of $U(t)$]\label{eith}
We have the following form factor of $U(t)$ between two wavelet states
\begin{equation}
\begin{aligned}
&\langle \pmb{\cg{\alpha}}|U(t)|\pmb{\cg{\beta}}\rangle=\det_{i,j} \psi_t(\cg{\alpha}_i,\cg{\beta}_j)\,.
\end{aligned}
\end{equation}
Here, $\psi_t(\cg{\alpha},\cg{\beta})$ is the function of $\cg{\alpha},\cg{\beta}\in K_\otimes$ given by
\begin{equation}
    \psi_t(\cg{\alpha},\cg{\beta})=\sum_{\cg{\gamma}\in K_\otimes}G_{t}(\cg{\alpha},\cg{\gamma})\tilde{\psi}_t(\cg{\gamma},\cg{\beta})\,,
\end{equation}
with
\begin{equation}
    G_t(\cg{\alpha},\cg{\beta})=\sum_{\lambda\in K}\varphi(\lambda,\cg{\alpha})^*\varphi(\lambda,\cg{\beta})e^{-it\lambda^2}\,,
\end{equation}
and where $\tilde{\psi}_t(\cg{\alpha},\cg{\beta})$ satisfies the differential equation
\begin{equation}
    \partial_t\tilde{\psi}_t(\cg{\alpha},\cg{\beta})=-i\sum_{\cg{\nu}\in K_\otimes}U_t(\cg{\alpha},\cg{\nu})\tilde{\psi}_t(\cg{\nu},\cg{\beta})\,,
\end{equation}
with
\begin{equation}\label{utalphabeta}
    U_t(\cg{\alpha},\cg{\beta})=\sum_{\cg{\gamma}\in K_\otimes}G_{-t}(\cg{\alpha},\cg{\gamma})v_{b(\cg{\gamma})}(t)G_{t}(\cg{\gamma},\cg{\beta})\,,
\end{equation}
with initial condition $\tilde{\psi}_0(\cg{\alpha},\cg{\beta})=\delta_{\cg{\alpha},\cg{\beta}}$.
\end{theorem}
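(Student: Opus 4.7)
The starting point is that $H(\{v_b\})$ is a one-body Hamiltonian: it conserves particle number and in the $c\to\infty$ hardcore limit has no interactions, so via the Girardeau mapping the many-body unitary $U(t)$ acts on Slater-determinant states by a linear transformation of the single-particle modes. Since Lemma~\ref{overlap} already exhibits the wavelet states in Slater-determinant form, the plan is to first obtain the single-particle form factor $\psi_t(\cg{\alpha},\cg{\beta})=\langle\cg{\alpha}|U^{(1)}(t)|\cg{\beta}\rangle$ and then promote it to the many-body form factor via the standard free-fermion identity $\langle\pmb{\cg{\alpha}}|U(t)|\pmb{\cg{\beta}}\rangle=\det_{i,j}\psi_t(\cg{\alpha}_i,\cg{\beta}_j)$. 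At the single-particle level I would go into the interaction picture $U^{(1)}(t)=U_0^{(1)}(t)\,\tilde U^{(1)}(t)$ with $U_0^{(1)}(t)=e^{-iH_0^{(1)} t}$, so that the free propagation produces $G_t$ and the inhomogeneous potentials produce $\tilde\psi_t$.

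For the free part, $G_t(\cg{\alpha},\cg{\beta})=\langle\cg{\alpha}|e^{-iH_0^{(1)} t}|\cg{\beta}\rangle$ is obtained by inserting the resolution of identity $\sum_{\lambda\in K}|\lambda\rangle\langle\lambda|$ on single-particle eigenstates of $H_0$ and using $\langle\lambda|\cg{\alpha}\rangle=\varphi(\lambda,\cg{\alpha})$, the single-particle content of Lemma~\ref{overlap}; this immediately gives the stated formula for $G_t$. At the many-body level, the same insertion $\sum_{\pmb{\lambda}}|\pmb{\lambda}\rangle\langle\pmb{\lambda}|$ together with the two determinants supplied by Lemma~\ref{overlap} and the factorised weight $e^{-iE(\pmb{\lambda})t}=\prod_j e^{-it\lambda_j^2}$ absorbed symmetrically into both determinants collapses, via the Cauchy--Binet identity, into the $N\times N$ determinant $\det_{i,j}G_t(\cg{\alpha}_i,\cg{\beta}_j)$ for $\langle\pmb{\cg{\alpha}}|e^{-iH_0 t}|\pmb{\cg{\beta}}\rangle$.

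For the interaction part, the crucial observation is that $V$ is diagonal in the wavelet basis: the indicator functions in \eqref{chiwavelet} strictly confine each factor $|\pmb{\alpha^{(b)}}\rangle$ to box $b$, hence $N_b|\pmb{\cg{\alpha}}\rangle=N_b(\pmb{\cg{\alpha}})|\pmb{\cg{\alpha}}\rangle$, and at the single-particle level $V(t)$ acts as $|\cg{\alpha}\rangle\mapsto v_{b(\cg{\alpha})}(t)|\cg{\alpha}\rangle$. Consequently $\langle\cg{\alpha}|V_I(t)|\cg{\nu}\rangle=\langle\cg{\alpha}|U_0^{(1)}(-t)V(t)U_0^{(1)}(t)|\cg{\nu}\rangle$ factorises as $\sum_{\cg{\gamma}}G_{-t}(\cg{\alpha},\cg{\gamma})\,v_{b(\cg{\gamma})}(t)\,G_t(\cg{\gamma},\cg{\nu})=U_t(\cg{\alpha},\cg{\nu})$, and the standard interaction-picture equation $i\partial_t\tilde U^{(1)}=V_I(t)\tilde U^{(1)}$ with initial condition $\tilde\psi_0(\cg{\alpha},\cg{\beta})=\delta_{\cg{\alpha},\cg{\beta}}$ becomes exactly the differential equation stated for $\tilde\psi_t$. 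Composing $U_0^{(1)}(t)$ with $\tilde U^{(1)}(t)$ yields $\psi_t=G_t\tilde\psi_t$ in matrix form, matching the formula in the theorem.

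The main obstacle is justifying the lift from the single-particle to the many-body matrix element, i.e.\ that the entire $\langle\pmb{\cg{\alpha}}|U(t)|\pmb{\cg{\beta}}\rangle$ equals $\det_{i,j}\psi_t(\cg{\alpha}_i,\cg{\beta}_j)$ and not merely the free piece does. This can be established either by expanding $U(t)$ in a Dyson series and applying Cauchy--Binet recursively at each order (every insertion of $V$ is wavelet-diagonal and every intermediate free propagator contributes a Cauchy--Binet determinant between wavelet states, whose entries telescope into a single determinant of $\psi_t$), or more structurally by invoking the fact that a one-body unitary maps a Slater determinant of single-particle modes to another Slater determinant, whose overlap with any other Slater determinant is automatically the determinant of single-particle overlaps.
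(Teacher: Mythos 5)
Your proposal is correct, and it reaches the theorem by a genuinely different route from the paper. The paper never invokes the ``one-body unitaries map Slater determinants to Slater determinants'' fact as a black box: it Trotterizes $U(t)$ into alternating factors $e^{-i\delta t H_0}e^{-i\delta t V}$, inserts the eigenbasis $\pmb{\lambda}$ and the wavelet basis $\pmb{\cg{\gamma}}$ at each step, uses Lemma~\ref{overlap} together with the Andreief (Cauchy--Binet) identity to show \emph{by recursion in the Trotter index} that each partial product is a determinant of a kernel $\psi_t^{(m)}$, and only then passes to the interaction picture at the level of that kernel to obtain a finite Trotter limit. Your primary route instead reduces everything to the single-particle problem first (interaction picture for $U^{(1)}(t)$, with $G_t$ from inserting $\sum_{\lambda\in K}|\lambda\rangle\langle\lambda|$ and $U_t$ from the wavelet-diagonality of $V$), and lifts to the many-body matrix element structurally; your fallback (Dyson series plus recursive Cauchy--Binet) is essentially the paper's argument in disguise. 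The single-particle content is identical in both treatments --- it has to be, since the formulas for $G_t$, $U_t$ and $\tilde\psi_t$ agree --- and both correctly avoid the non-differentiability of $\psi_t$ at $t=0$ by differentiating only in the interaction picture. What your structural route buys is brevity and a transparent explanation of \emph{why} the determinant form holds; what the paper's route buys is self-containedness: it only ever manipulates bosonic overlaps (Lemma~\ref{overlap}) and never needs to verify explicitly that the Girardeau sign factor $\prod_{i<j}\sign(x_i-x_j)$ turns the boxed wave function \eqref{chiwavelet} into a genuine fermionic Slater determinant of disjointly supported orbitals, that this sign drops out of all relevant matrix elements, and that $V$ (being a density functional) commutes with the mapping. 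Those checks are routine but are the one place where your argument is asserted rather than proved; if you include them, together with the orthonormality and completeness of the single-particle wavelet basis needed to insert $\sum_{\cg{\gamma}}|\cg{\gamma}\rangle\langle\cg{\gamma}|$, your proof is complete.
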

\begin{proof}
Using Trotter's expansion we have
\begin{equation}
    U(t)=\underset{T\to\infty}{\lim}\, \prod_{m=1}^T\left(e^{-i\frac{t}{T}H_0} e^{-i\frac{t}{T}V(\tfrac{mt}{T})}\right)\,.
\end{equation}
Let us fix $T$, introduce $\delta t=\tfrac{t}{T}$ and define
\begin{equation}
    F_m(\pmb{\cg{\alpha}},\pmb{\cg{\beta}})=\langle \pmb{\cg{\alpha}}| \prod_{m'=1}^m\left(e^{-i\delta tH_0} e^{-i\delta tV(m'\delta t)}\right)|\pmb{\cg{\beta}}\rangle\,.
\end{equation}
Let us show that we can always write
\begin{equation}\label{rec}
F_m(\pmb{\cg{\alpha}},\pmb{\cg{\beta}})=\det_{i,j}\psi_t^{(m)}(\cg{\alpha}_i,\cg{\beta}_j)\,,
\end{equation}
with some function $\psi_t^{(m)}(\cg{\alpha},\cg{\beta})$. For $m=0$, this is true with $\psi_t^{(0)}(\cg{\alpha},\cg{\beta})=\delta_{\cg{\alpha},\cg{\beta}}$. Let us assume it is true for $m-1$ and show it is true for $m$. We have
\begin{equation}
\begin{aligned}
     F_{m}(\pmb{\cg{\alpha}},\pmb{\cg{\beta}})&=\langle \pmb{\cg{\alpha}}| e^{-i\delta tH_0} e^{-i\delta tV(m\delta t)}\prod_{m'=1}^{m-1}\left(e^{-i\delta tH_0} e^{-i\delta tV(m'\delta t)}\right)|\pmb{\cg{\beta}}\rangle\\
    &=\sum_{\pmb{\lambda},\pmb{\cg{\gamma}}}\langle \pmb{\cg{\alpha}}| e^{-i\delta tH_0} |\pmb{\lambda}\rangle\langle\pmb{\lambda}|e^{-i\delta tV(m\delta t)}|\pmb{\cg{\gamma}}\rangle\langle\pmb{\cg{\gamma}}|\prod_{m'=1}^{m-1}\left(e^{-i\delta tH_0} e^{-i\delta tV(m'\delta t)}\right)|\pmb{\cg{\beta}}\rangle\\
    &=\sum_{\pmb{\lambda},\pmb{\cg{\gamma}}}\langle \pmb{\cg{\alpha}}| \pmb{\lambda}\rangle\langle\pmb{\lambda}|\pmb{\cg{\gamma}}\rangle\langle \pmb{\cg{\alpha}}| \prod_{m'=1}^{m-1}\left(e^{-i\delta tH_0} e^{-i\delta tV(m'\delta t)}\right)|\pmb{\cg{\beta}}\rangle e^{-i\delta t\sum_i \lambda_i^2}e^{-i\delta t\sum_{i}v_{b(\cg{\gamma}_i)}(m\delta t)}\,.
\end{aligned}
\end{equation}
We now use Lemma \ref{overlap} to express the form factors $\langle \pmb{\cg{\alpha}}| \pmb{\lambda}\rangle$ as determinants. We then use the following Lemma to sum over $\pmb{\lambda}$, proven e.g. in \cite{dreyer2021quantum,granet2022out}.
\begin{lemma}[Andreief identity]\label{andreief}
Given two functions $f(\lambda,\mu)$ and $g(\lambda,\mu)$, a set $S$ and two sets of numbers $\lambda_1,...,\lambda_N$ and $\mu_1,...,\mu_N$, we have the relation
\begin{equation}
    \sum_{k_1<...<k_N\in S}\det_{i,j}[f(\lambda_i,k_j)]\det_{i,j}[g(k_i,\mu_j)]=\det_{i,j}\left[\sum_{k\in K}f(\lambda_i,k)g(k,\mu_j) \right]\,.
\end{equation}
\end{lemma}
We obtain
\begin{equation}
    F_{m}(\pmb{\cg{\alpha}},\pmb{\cg{\beta}})=\sum_{\pmb{\cg{\gamma}}} \det_{i,j}\left[\sum_{\lambda\in K}\varphi(\cg{\alpha}_i,\lambda)^*\varphi(\lambda,\cg{\gamma}_j)e^{-i\delta t\lambda^2} \right] F_{m-1}(\pmb{\cg{\gamma}},\pmb{\cg{\beta}})e^{-i\delta t\sum_{i}v_{b(\cg{\gamma}_i)}(m\delta t)}\,.
\end{equation}
Then using the recurrence assumption on $F_m(\pmb{\cg{\gamma}},\pmb{\cg{\beta}})$ we use again Lemma \ref{andreief} to sum over $\pmb{\cg{\gamma}}$. This yields
\begin{equation}
     F_{m}(\pmb{\cg{\alpha}},\pmb{\cg{\beta}})=\det_{i,j}\psi_t^{(m)}(\cg{\alpha}_i,\cg{\beta}_j)\,,
\end{equation}
with
\begin{equation}\label{diff}
    \psi_t^{(m)}(\cg{\alpha},\cg{\beta})= \sum_{\cg{\gamma}\in K_\otimes} \sum_{\lambda\in K}\varphi(\cg{\alpha},\lambda)^*\varphi(\lambda,\cg{\gamma}) \psi_t^{(m-1)}(\cg{\gamma},\cg{\beta})e^{-i\delta t\lambda^2}e^{-i\delta t v_{b(\cg{\gamma})}(m\delta t)}\,.
\end{equation}
Hence by recurrence the form \eqref{rec} is true for all $m$.

To obtain a well-defined differential equation in the Trotter limit $\delta t\to 0$, we go to the interaction picture by introducing the Green's function for $\cg{\alpha},\cg{\beta}\in K_\otimes$
\begin{equation}\label{GTab}
    G_t(\cg{\alpha},\cg{\beta})=\sum_{\lambda\in K}\varphi(\lambda,\cg{\alpha})^*\varphi(\lambda,\cg{\beta})e^{-it\lambda^2}\,,
\end{equation}
and defining $\tilde{\psi}_t^{(m)}(\cg{\alpha},\cg{\beta})$ by
\begin{equation}
    \tilde{\psi}_t^{(m)}(\cg{\alpha},\cg{\beta})=\sum_{\cg{\gamma}\in K_\otimes}G_{-m\delta t}(\cg{\alpha},\cg{\gamma})\psi_t^{(m)}(\cg{\gamma},\cg{\beta})\,.
\end{equation}
We note that we have
\begin{equation}
    \sum_{\cg{\gamma}\in K_\otimes}G_t(\cg{\alpha},\cg{\gamma})G_{-t}(\cg{\gamma},\cg{\beta})=\delta_{\cg{\alpha},\cg{\beta}}\,.
\end{equation}
From \eqref{diff} we obtain the recurrence relation on $\tilde{\psi}_t^{(m)}$
\begin{equation}
    \tilde{\psi}_t^{(m)}(\cg{\alpha},\cg{\beta})=\sum_{\substack{\cg{\gamma},\cg{\kappa},\cg{\nu}\in K_\otimes\\ \lambda\in K}}G_{-m\delta t}(\cg{\alpha},\cg{\kappa})\varphi(\lambda,\cg{\kappa})^* e^{-i\delta t\lambda^2}\varphi(\lambda,\cg{\gamma})e^{-i\delta t v_{b(\cg{\gamma})}(m\delta t)}G_{(m-1)\delta t}(\cg{\gamma},\cg{\nu})\tilde{\psi}_t^{(m-1)}(\cg{\nu},\cg{\beta})\,.
\end{equation}
Performing the sum over $\cg{\kappa}$ using
\begin{equation}
    \sum_{\cg{\gamma}\in K_\otimes}\varphi(\lambda,\cg{\gamma})^*\varphi(\mu,\cg{\gamma})=\delta_{\lambda,\mu}\,,
\end{equation}
we obtain
\begin{equation}
    \tilde{\psi}_t^{(m)}(\cg{\alpha},\cg{\beta})=\sum_{\cg{\nu}\in K_\otimes}\left(\sum_{\cg{\gamma}\in K_\otimes}G_{-(m-1)\delta t}(\cg{\alpha},\cg{\gamma})e^{-i\delta t v_{b(\cg{\gamma})}(m\delta t)}G_{(m-1)\delta t}(\cg{\gamma},\cg{\nu})\right)\tilde{\psi}_t^{(m-1)}(\cg{\nu},\cg{\beta})\,.
\end{equation}
This form allows for a well-defined Trotter limit $\delta t\to 0$. We thus obtain the representation
\begin{equation}
\begin{aligned}
&\langle \pmb{\cg{\alpha}}|U(t)|\pmb{\cg{\beta}}\rangle=\det_{i,j} \psi_t(\cg{\alpha}_i,\cg{\beta}_j)\,,
\end{aligned}
\end{equation}
with $\psi_t(\cg{\alpha},\cg{\beta})$ given by
\begin{equation}
    \psi_t(\cg{\alpha},\cg{\beta})=\sum_{\cg{\gamma}\in K_\otimes}G_{t}(\cg{\alpha},\cg{\gamma})\tilde{\psi}_t(\cg{\gamma},\cg{\beta})\,,
\end{equation}
with $\tilde{\psi}_t(\cg{\alpha},\cg{\beta})$ satisfying the differential equation
\begin{equation}
    \partial_t\tilde{\psi}_t(\cg{\alpha},\cg{\beta})=-i\sum_{\cg{\nu}\in K_\otimes}\left(\sum_{\cg{\gamma}\in K_\otimes}G_{-t}(\cg{\alpha},\cg{\gamma})v_{b(\cg{\gamma})}(t)G_{t}(\cg{\gamma},\cg{\nu})\right)\tilde{\psi}_t(\cg{\nu},\cg{\beta})\,.
\end{equation}
This concludes our proof.

\end{proof}

The reader could notice that in Theorem \ref{eith} we could have written a differential equation directly in terms of $\psi_t$ rather than $\tilde{\psi}_t$. In fact, the matrix appearing in the differential equation for $\psi_t$ would be defined in terms of a divergent series. This reflects the fact that $\psi_t$ itself is not differentiable at $t=0$. This can also be seen at the level of $G_t$, whose expression in terms of a series \eqref{GTab} does not allow for term-by-term time differentiation.

\section{Expectation values of local operators \label{expe}}
\subsection{Cumulative reduced density matrix}
We now would like to express expectation values of observables localized in box $b$ within state $|\Psi(t)\rangle$ in terms of $\psi_t(\cg{\alpha},\cg{\beta})$. To that end, we need to compute the reduced density matrix in box $b$
\begin{equation}
    \rho_b=\underset{\substack{c=1,...,n\\ c\neq b}}{\Tr}\big(|\Psi(t)\rangle\langle\Psi(t)|\big)
\end{equation}
where the trace is over all the boxes $c\neq b$. As we will see, another related quantity has a simpler expression in terms of $\psi_t(\cg{\alpha},\cg{\beta})$. This is the cumulative reduced density matrix that we define by
\begin{equation}\label{rhocumuldef}
    \rho_b^{\rm cumul}=\sum_{\substack{\pmb{\cg{x}},\pmb{\cg{y}}\subset K_\otimes\\ b(\pmb{\cg{x}})=b(\pmb{\cg{y}})=b}}|\pmb{\cg{x}}\rangle\langle \pmb{\cg{y}}| \sum_{\substack{\pmb{\cg{\beta}},\pmb{\cg{\gamma}}\subset K_\otimes \\ \pmb{\cg{x}}\subset \pmb{\cg{\beta}}\,,\,\pmb{\cg{y}}\subset \pmb{\cg{\gamma}} \\ \pmb{\cg{\beta}}\setminus \pmb{\cg{x}}=\pmb{\cg{\gamma}}\setminus \pmb{\cg{y}}}} \langle \pmb{\cg{\beta}}|\Psi(t)\rangle \langle \Psi(t)|\pmb{\cg{\gamma}}\rangle\,,
\end{equation}
where by $b(\pmb{\cg{x}})=b$ we mean that for all $\cg{x}\in\pmb{\cg{x}}$ we have $b(\cg{x})=b$. We note that from this expression, the definition of the reduced density matrix $\rho_b$ would be obtained by imposing that $\pmb{\cg{\beta}}\setminus\pmb{\cg{x}}$ and $\pmb{\cg{\gamma}}\setminus\pmb{\cg{y}}$ have no particles in box $b$. The terms for which $\pmb{\cg{\beta}}\setminus\pmb{\cg{x}}$ and $\pmb{\cg{\gamma}}\setminus\pmb{\cg{y}}$ have exactly one particle in box $b$ can be expressed in terms of the reduced density matrix as $T[\rho_b]$ with
the linear operator applying on density matrices
\begin{equation}
    T[\rho]=\sum_{\lambda\in K_{\rm box}}\phi(\lambda)\rho\phi^\dagger(\lambda)\,,
\end{equation}
with $\phi^\dagger(\lambda)$ the bosonic creation operator for the mode $\lambda\in K_{\rm box}$ in box $b$. Similarly, terms with higher numbers of particles can be expressed through powers of $T$. One thus has
\begin{equation}\label{rhocumulformula}
     \rho_b^{\rm cumul}=\sum_{k=0}^\infty \frac{T^k[\rho_b]}{k!}\,.
\end{equation}
The inverse of this relation is then seen to be
\begin{equation}\label{rhotorhocumul}
    \rho_b=\sum_{k=0}^\infty (-1)^k\frac{T^k[\rho_b^{\rm cumul}]}{k!}\,.
\end{equation}
Hence knowing $ \rho_b^{\rm cumul}$ or $\rho_b$ is equivalent.

\subsection{Expectation values within a box in terms of \texorpdfstring{$\rho_b^{\rm cumul}$}{Lg}}
In fact, the expectation value of many local operators can be directly expressed in terms of $\rho_b^{\rm cumul}$, without using $\rho_b$. For example, let us take the example of the two-point function of the boson density operator
\begin{equation}\label{2pt}
\mathcal{O}_b=\phi^\dagger(x)\phi(x)\phi^\dagger(y)\phi(y)\,,
\end{equation}
with $x,y\in [(b-1)\tfrac{L}{n},b\tfrac{L}{n}]$ localized in box $b$. Its expectation value can be expressed through $\rho_b^{\rm cumul}$ as
\begin{equation}
    \langle \mathcal{O}_b \rangle=\frac{1}{L^2}\sum_{h_1,h_2,p_1,p_2\in K_{\rm box}} e^{ix(p_1-h_1)+iy(p_2-h_2)}{}_b \langle h_1,h_2|\rho_b^{\rm cumul}|p_1,p_2\rangle_b\,,
\end{equation}
with $|p_1,p_2\rangle_b$ denoting the state with only two particles $p_1,p_2$ in box $b$.\\

Let us define observables that allow us to probe the particle content of a box $b$. Given a function $w(\beta)$, we define the operator $\mathcal{O}_b(w)$ as being localized in box $b$ and being diagonal in the wavelet basis $|\pmb{\cg{\alpha}}\rangle$ with eigenvalue
\begin{equation}\label{obw}
    \mathcal{O}_b(w)|\pmb{\cg{\alpha}}\rangle= \sum_{\substack{\cg{\alpha}\in\pmb{\cg{\alpha}}\\ b(\cg{\alpha})=b}}w(\cg{\alpha})|\pmb{\cg{\alpha}}\rangle\,.
\end{equation}
The expectation value of $\mathcal{O}_b(w)$ in state $|\Psi(t)\rangle$ is readily expressed in terms of $\rho_b^{\rm cumul}$
\begin{equation}\label{obwcumul}
    \langle\mathcal{O}_b(w)\rangle_t=\sum_{\alpha\in K_{\rm box}}w(\alpha){}_b\langle \alpha|\rho_b^{\rm cumul}|\alpha\rangle_b\,,
\end{equation}
where $|\alpha\rangle_b$ denotes a single particle eigenstate with momentum $\alpha$ in box $b$.

\subsection{Expression of \texorpdfstring{$\rho_b^{\rm cumul}$}{Lg} in terms of \texorpdfstring{$\psi_t(\cg{\alpha},\cg{\beta})$}{Lg}}
This cumulative reduced density matrix admits the following simple expression in terms of $\psi_t(\cg{\alpha},\cg{\beta})$.
\begin{theorem}\label{theocumul}
    For $\pmb{\cg{x}},\pmb{\cg{y}}\subset K_\otimes$ localized in box $b$ with $M$ particles we have
    \begin{equation}
        \langle \pmb{\cg{x}}|\rho_b^{\rm cumul}|\pmb{\cg{y}}\rangle=\sum_{q_1,...,q_M=1}^N\det_{i,j=1,...,M}\left[ \psi_t(\cg{\alpha}_{q_i},\cg{x}_i)\psi_t(\cg{\alpha}_{q_j},\cg{y}_i)^* \right]\,,
    \end{equation}
    where we recall that $|\pmb{\cg{\alpha}}\rangle$ denotes the initial wavelet state. If $\pmb{\cg{x}},\pmb{\cg{y}}$ have a different number of particles, then $\langle \pmb{\cg{x}}|\rho_b^{\rm cumul}|\pmb{\cg{y}}\rangle=0$.
\end{theorem}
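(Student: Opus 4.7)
The plan is to start from the definition \eqref{rhocumuldef} of $\rho_b^{\rm cumul}$, insert the determinantal overlaps provided by Theorem~\ref{eith}, and reduce the resulting double sum via a Laplace expansion followed by an application of Andreief (Lemma~\ref{andreief}) to the sum over $\pmb{\cg{\delta}}$.

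First, since $|\Psi(t)\rangle$ has $N$ particles, the only nonzero contributions to the inner sum in \eqref{rhocumuldef} come from $|\pmb{\cg{\beta}}|=|\pmb{\cg{\gamma}}|=N$. Writing $\pmb{\cg{\beta}}=\pmb{\cg{x}}\cup\pmb{\cg{\delta}}$ and $\pmb{\cg{\gamma}}=\pmb{\cg{y}}\cup\pmb{\cg{\delta}}$ forces $|\pmb{\cg{x}}|=|\pmb{\cg{y}}|=M$ and $|\pmb{\cg{\delta}}|=N-M$, which already yields the vanishing statement when the particle numbers differ. For equal $M$, Theorem~\ref{eith} expresses both $\langle\pmb{\cg{x}}\cup\pmb{\cg{\delta}}|U(t)|\pmb{\cg{\alpha}}\rangle$ and $\overline{\langle\pmb{\cg{y}}\cup\pmb{\cg{\delta}}|U(t)|\pmb{\cg{\alpha}}\rangle}$ as $N\times N$ determinants in $\psi_t$.

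I would then Laplace-expand each determinant along the $M$ rows labelled by $\pmb{\cg{x}}$ (resp.\ $\pmb{\cg{y}}$), introducing a sum over $M$-element column subsets $Q,Q'\subset[N]$. The product of the two expansions factorises into an $\pmb{\cg{x}},\pmb{\cg{y}}$-piece (two $M\times M$ minors of $\psi_t$ indexed by $Q,Q'$) and a $\pmb{\cg{\delta}}$-piece (two $(N{-}M)\times(N{-}M)$ minors indexed by $[N]\setminus Q,[N]\setminus Q'$). Swapping the order of summation and applying Andreief to the sum over $\pmb{\cg{\delta}}$ collapses the latter piece into an $(N{-}M)\times(N{-}M)$ determinant whose entries are the single-particle overlaps $\sum_{\cg{\delta}\in K_\otimes}\overline{\psi_t(\cg{\delta},\cg{\alpha}_{q'})}\psi_t(\cg{\delta},\cg{\alpha}_q)$. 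Because $\psi_t$ represents the unitary $U(t)$ in the orthonormal wavelet basis, these overlaps equal $\delta_{q,q'}$; the resulting determinant is then $\mathbf{1}_{Q=Q'}$, the Laplace signs $\epsilon(Q)\epsilon(Q')$ square to one, and only the diagonal $Q=Q'$ contributes.

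What survives is
\begin{equation*}
\langle\pmb{\cg{x}}|\rho_b^{\rm cumul}|\pmb{\cg{y}}\rangle = \sum_{Q\subset[N],\,|Q|=M}\det_{i,q\in Q}\psi_t(\cg{x}_i,\cg{\alpha}_q)\,\det_{i,q\in Q}\psi_t(\cg{y}_i,\cg{\alpha}_q)^*.
\end{equation*}
To recast this in the tuple form of the theorem, I would convert the sum over ordered $M$-subsets into a symmetric sum over tuples $(q_1,\dots,q_M)\in[N]^M$ (with the conventional $1/M!$), expand one of the two $M\times M$ determinants as $\sum_\sigma(-1)^\sigma\prod_i\psi_t(\cdot,\cg{\alpha}_{q_{\sigma(i)}})$, and absorb the $M!$ permutations via the relabelling $q\mapsto q\circ\sigma$ together with the antisymmetry of the other determinant. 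Re-folding the resulting product-times-determinant into a single $M\times M$ determinant produces the displayed $(i,j)$-entry. The main obstacle will be the careful bookkeeping of the Laplace signs $\epsilon(Q)$ and a clean statement of the orthogonality relation $\sum_{\cg{\delta}}\overline{\psi_t(\cg{\delta},\cg{\alpha}')}\psi_t(\cg{\delta},\cg{\alpha})=\delta_{\cg{\alpha},\cg{\alpha}'}$, which is not proved explicitly in the text but follows from unitarity of $U(t)$ and completeness of the wavelet basis.
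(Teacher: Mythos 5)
Your proposal is correct and arrives at the paper's formula, but the combinatorial reduction is organized along a genuinely different route. The paper, after the same first step (rewriting the matrix element as $\sum_{\pmb{\cg{\delta}}}\langle\pmb{\cg{x}}\cup\pmb{\cg{\delta}}|U(t)|\pmb{\cg{\alpha}}\rangle\langle\pmb{\cg{\alpha}}|U(-t)|\pmb{\cg{y}}\cup\pmb{\cg{\delta}}\rangle$ and expressing both factors as $N\times N$ determinants via Theorem \ref{eith}), merges the two determinants into a single $N\times N$ determinant of a product, expands it over pairs of permutations $\sigma,\tau$, uses the orthogonality relation to collapse the $\pmb{\cg{\delta}}$-sums factor by factor, and then recombines the surviving permutation sums into the $M\times M$ determinant. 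You instead perform a generalized Laplace expansion along the $\pmb{\cg{x}}$ (resp.\ $\pmb{\cg{y}}$) rows and apply Andreief a second time to the complementary $(N{-}M)\times(N{-}M)$ minors, which collapses directly to $\delta_{Q,Q'}$; this is the standard ``marginal of a determinantal process'' computation, it avoids the permutation bookkeeping entirely, and the signs $\epsilon(Q)\epsilon(Q')$ do square to one on the diagonal $Q=Q'$, as you claim. Both routes rest on the same orthogonality $\sum_{\cg{\delta}}\psi_t(\cg{\delta},\cg{\alpha})\psi_t(\cg{\delta},\cg{\alpha}')^*=\delta_{\cg{\alpha},\cg{\alpha}'}$, which the paper likewise invokes without proof in the course of its own argument; your attribution of it to unitarity of $U(t)$ plus completeness of the wavelet basis is the right justification. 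One small point you should make explicit: the $\pmb{\cg{\delta}}$-sum coming from the definition runs only over sets disjoint from $\pmb{\cg{x}}$ and $\pmb{\cg{y}}$, whereas Andreief requires the unrestricted sum over all $(N{-}M)$-subsets of $K_\otimes$. The extension is free because the full $N\times N$ determinants acquire a repeated row whenever $\pmb{\cg{\delta}}$ meets $\pmb{\cg{x}}$ or $\pmb{\cg{y}}$ (the paper states this explicitly), but you must perform this extension \emph{before} Laplace-expanding, since the individual products of complementary minors need not vanish term by term.
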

\begin{proof}
Firstly, we see in the definition of $\rho_b^{\rm cumul}$ in \eqref{rhocumuldef} that $\pmb{\cg{\beta}}$ and $\pmb{\cg{\gamma}}$ must have the same number of particles $N$ as in $|\Psi(t)\rangle$. Since $\pmb{\cg{\beta}}\setminus \pmb{\cg{x}}=\pmb{\cg{\gamma}}\setminus \pmb{\cg{y}}$, there has to be the same number of particles in $\pmb{\cg{x}}$ and $\pmb{\cg{y}}$.

From the definition we have
\begin{equation}
     \langle \pmb{\cg{x}}|\rho_b^{\rm cumul}|\pmb{\cg{y}}\rangle=\sum_{\substack{\pmb{\cg{\beta}} \subset K_\otimes\\ \pmb{\cg{\beta}}\cap \pmb{\cg{x}}=\pmb{\cg{\beta}}\cap \pmb{\cg{y}}=\emptyset}} \langle \pmb{\cg{\beta}}\cup \pmb{\cg{x}}|U(t)|\pmb{\cg{\alpha}}\rangle\langle\pmb{\cg{\alpha}}|U(-t)|\pmb{\cg{\beta}}\cup \pmb{\cg{y}}\rangle\,.
\end{equation}
Using Lemma \ref{eith} to express the form factors of $U(t)$ as determinants we have
\begin{equation}
    \langle \pmb{\cg{x}}|\rho_b^{\rm cumul}|\pmb{\cg{y}}\rangle=\sum_{\substack{\pmb{\cg{\beta}} \subset K_\otimes\\|\pmb{\cg{\beta}}|=N-M\\ \pmb{\cg{\beta}}\cap \pmb{\cg{x}}=\pmb{\cg{\beta}}\cap \pmb{\cg{y}}=\emptyset}} \det_{i,j} \psi_t(\cg{\beta}'_i,\cg{\alpha}_j) \det_{i,j} \psi_t(\cg{\beta}''_i,\cg{\alpha}_j)^*\,,
\end{equation}
with $\pmb{\cg{\beta}'}=\pmb{\cg{\beta}}\cup\pmb{\cg{x}}$ and $\pmb{\cg{\beta}''}=\pmb{\cg{\beta}}\cup\pmb{\cg{y}}$ imposed to have $N$ particles. In this expression we can include in the sum the terms where $\pmb{\cg{\beta}}$ contains elements of $\pmb{\cg{x}}$ or $\pmb{\cg{y}}$, since the determinants vanish in this case. Hence, setting $\pmb{\cg{x}}$ and $\pmb{\cg{y}}$ to be the first $M$ indices of $\pmb{\cg{\beta}'},\pmb{\cg{\beta}''}$ and writing $\pmb{\cg{\beta}}=\{\cg{\beta}_{M+1},...,\cg{\beta}_N\}$ we have
\begin{equation}
     \langle \pmb{\cg{x}}|\rho_b^{\rm cumul}|\pmb{\cg{y}}\rangle=\frac{1}{(N-M)!}\sum_{\cg{\beta}_{M+1}\in K_\otimes}...\sum_{\cg{\beta}_{N}\in K_\otimes}\det_{i,j} \psi_t(\cg{\beta}'_i,\cg{\alpha}_j) \det_{i,j} \psi_t(\cg{\beta}''_i,\cg{\alpha}_j)^*\,.
\end{equation}
Then we write that the product of determinants is the determinant of the product
\begin{equation}
   \det_{i,j} \psi_t(\cg{\beta}'_i,\cg{\alpha}_j) \det_{i,j} \psi_t(\cg{\beta}''_i,\cg{\alpha}_j)^*=\det_{i,j}\left[ \sum_{q=1}^N \psi_t(\cg{\beta}'_q,\cg{\alpha}_i)\psi_t(\cg{\beta}''_q,\cg{\alpha}_j)^*\right]\,,
\end{equation}
and expand the determinant as a sum over permutations
\begin{equation}
    \det_{i,j}\left[ \sum_{q=1}^N \psi_t(\cg{\beta}'_q,\cg{\alpha}_i)\psi_t(\cg{\beta}''_q,\cg{\alpha}_j)^*\right]=\sum_{\sigma\in\mathfrak{S}_N}(-1)^\sigma \prod_{i=1}^N \left(  \sum_{q=1}^N \psi_t(\cg{\beta}'_q,\cg{\alpha}_i)\psi_t(\cg{\beta}''_q,\cg{\alpha}_{\sigma(i)})^* \right)\,.
\end{equation}
Expanding the product, we see that if we pick twice the same $q$ for $i$ and $j$, then the product of the two terms is invariant under swapping $\sigma(i)$ and $\sigma(j)$, whereas this changes the sign of $(-1)^\sigma$, making the contribution vanish. Hence we have
\begin{equation}
     \det_{i,j}\left[ \sum_{q=1}^N \psi_t(\cg{\beta}'_q,\cg{\alpha}_i)\psi_t(\cg{\beta}''_q,\cg{\alpha}_j)^*\right]=\sum_{\sigma,\tau\in\mathfrak{S}_N}(-1)^\sigma \prod_{i=1}^N \psi_t(\cg{\beta}'_{\tau(i)},\cg{\alpha}_i)\psi_t(\cg{\beta}''_{\tau(i)},\cg{\alpha}_{\sigma(i)})^*\,.
\end{equation}
So
\begin{equation}
\begin{aligned}
     \langle \pmb{\cg{x}}|\rho_b^{\rm cumul}|\pmb{\cg{y}}\rangle=\frac{1}{(N-M)!}\sum_{\sigma,\tau\in\mathfrak{S}_N}&(-1)^\sigma \prod_{i=1}^M \psi_t(\cg{x}_i,\cg{\alpha}_{\tau^{-1}(i)})\psi_t(\cg{y}_i,\cg{\alpha}_{\sigma\tau^{-1}(i)})^*\\
     &\times\sum_{\cg{\beta}_{M+1}\in K_\otimes}...\sum_{\cg{\beta}_{N}\in K_\otimes} \prod_{i=M+1}^N \psi_t(\cg{\beta}_i,\cg{\alpha}_{\tau^{-1}(i)})\psi_t(\cg{\beta}_i,\cg{\alpha}_{\sigma\tau^{-1}(i)})^*\,.
\end{aligned}
\end{equation}
The sum over $\cg{\beta}_{M+1},...,\cg{\beta}_N$ factorizes into products of single sums that read
\begin{equation}\label{equal}
    \sum_{\cg{\beta}\in K_\otimes}\psi_t(\cg{\beta},\cg{\alpha}_{\tau^{-1}(i)})\psi_t(\cg{\beta},\cg{\alpha}_{\sigma\tau^{-1}(i)})^*=\delta_{\tau^{-1}(i),\sigma \tau^{-1}(i)}\,.
\end{equation}
Then the sum over $\tau$ does not depend on the ordering of $\tau^{-1}(M+1),...,\tau^{-1}(N)$. It can be converted into a sum over subsets $\{q_1,...,q_M\}\subset\{1,...,N\}$ with a factor $(N-M)!$ and with a remaining permutation over $\tau^{-1}(1),...,\tau^{-1}(M)$. It yields
\begin{equation}
     \langle \pmb{\cg{x}}|\rho_b^{\rm cumul}|\pmb{\cg{y}}\rangle=\sum_{\substack{\{q_1,...,q_M\}\\ \subset \{1,...,N\}}}\sum_{\sigma,\tau\in\mathfrak{S}_M}(-1)^\sigma \prod_{i=1}^M \psi_t(\cg{x}_i,\cg{\alpha}_{q_{\tau(i)}})\psi_t(\cg{y}_i,\cg{\alpha}_{q_{\sigma\tau(i)}})^*\,.
\end{equation}
We used that $\sigma$ leaves invariant the set $\{q_1,...,q_M\}$ because of \eqref{equal}, to write $q_{\sigma(i)}$ instead of $\sigma(q_i)$ after a reparametrisation of $\sigma$. The sum over $\sigma$ is exactly a $M\times M$ determinant. Then it vanishes if two $q_i$'s are equal, so the sum over the subset $\{q_1,...,q_M\}$ can be converted into $M$ sums over $q\in K$ with a factor $1/M!$, and the sum over $\tau$ gives a factor $M!$. Hence
\begin{equation}
     \langle \pmb{\cg{x}}|\rho_b^{\rm cumul}|\pmb{\cg{y}}\rangle=\sum_{q_1,...,q_M\in \{1,...,N\}}\det_{i,j} [\psi_t(\cg{x}_i,\cg{\alpha}_{q_i})\psi_t(\cg{y}_i,\cg{\alpha}_{q_j})^*]\,,
\end{equation}
which concludes the proof.
\end{proof}

\subsection{Expectation values between different boxes}
\subsubsection{Generalities}
The previous construction is straightforwardly generalized to observables in several boxes. Given two boxes $b_1,b_2$, we define the reduced density matrix
\begin{equation}
    \rho_{b_1b_2}=\underset{\substack{c=1,...,n\\ c\neq b_1,b_2}}{\Tr}\big(|\Psi(t)\rangle\langle\Psi(t)|\big)\,,
\end{equation}
and the cumulative reduced density matrix
\begin{equation}
    \rho_{b_1b_2}^{\rm cumul}=\sum_{\substack{\pmb{\cg{x}},\pmb{\cg{y}}\subset K_\otimes\\ b(\pmb{\cg{x}}),b(\pmb{\cg{y}})\in\{b_1,b_2\}}}|\pmb{\cg{x}}\rangle\langle \pmb{\cg{y}}| \sum_{\substack{\pmb{\cg{\beta}},\pmb{\cg{\gamma}}\subset K_\otimes \\ \pmb{\cg{x}}\subset \pmb{\cg{\beta}}\,,\,\pmb{\cg{y}}\subset \pmb{\cg{\gamma}} \\ \pmb{\cg{\beta}}\setminus \pmb{\cg{x}}=\pmb{\cg{\gamma}}\setminus \pmb{\cg{y}}}} \langle \pmb{\cg{\beta}}|\Psi(t)\rangle \langle \Psi(t)|\pmb{\cg{\gamma}}\rangle\,,
\end{equation}
where $b(\pmb{\cg{x}})\in\{b_1,b_2\}$ means that for all $\cg{x}\in\pmb{\cg{x}}$, we have $b(\cg{x})\in\{b_1,b_2\}$. A formula identical to \eqref{rhocumulformula} relates $\rho_{b_1b_2}^{\rm cumul}$ to $\rho_{b_1b_2}$. Expectation values of operators localized in boxes $b_1,b_2$ can also be written in terms of $\rho_{b_1b_2}^{\rm cumul}$. For example, with $\mathcal{O}_b(w)$ denoting the operator in \eqref{obw}, we have
\begin{equation}
    \langle \mathcal{O}_{b_1}(w_1)\mathcal{O}_{b_2}(w_2)\rangle=\sum_{\alpha_1,\alpha_2\in K_{\rm box}} w_1(\alpha_1)w_2(\alpha_2){}_{b_1}\langle \alpha_1|\otimes {}_{b_2}\langle\alpha_2|\rho_{b_1b_2}^{\rm cumul}|\alpha_1\rangle_{b_1}\otimes|\alpha_2\rangle_{b_2}\,.
\end{equation}
Then, an identical formula to Theorem \ref{theocumul} gives the expression for $\langle \pmb{\bar{x}}| \rho_{b_1b_2}|\pmb{\bar{y}}\rangle$ in terms of $\psi_t(\bar{\alpha},\bar{\beta})$.

\subsubsection{Single-particle Green's function }
The wavelet representation allows for an efficient computation of certain observables which would be difficult to compute otherwise. Let us for example consider the single-particle bosonic Green's function
\begin{equation}
    \mathcal{G}_t(x,y)=\langle \phi^\dagger(x) \phi(y)\rangle_t\,.
\end{equation}
Within an eigenstate of the entire system $\pmb{\lambda}$, this Green's function does not have a simple expression, because of the products of signs in \eqref{chi} \cite{wilson2020observation,rigol2005fermionization,papenbrock2003ground,minguzzi2002high,girardeau2001ground,pezer2007momentum}. Similarly, if $x,y$ are two coordinates within the same box, no simple expression exists. However, if the two positions correspond to the beginning of two different boxes, i.e.\ if we set $x=(b_1-1)n/L$ and $y=(b_2-1)n/L$, then we have from \eqref{chiwavelet} for two wavelet states $\pmb{\cg{\alpha}},\pmb{\cg{\beta}}$
\begin{equation}
    \begin{aligned}
        \langle \pmb{\cg{\alpha}}|\phi^\dagger(x)\phi(y)|\pmb{\cg{\beta}}\rangle=\frac{1}{L}\,,
    \end{aligned}
\end{equation}
if for all $b\neq b_1,b_2$, we have $\pmb{\alpha^{(b)}}=\pmb{\beta^{(b)}}$, and if $\pmb{\alpha^{(b_1)}}=\pmb{\beta^{(b_1)}}\cup \{\lambda\}$ and $\pmb{\beta^{(b_2)}}=\pmb{\alpha^{(b_2)}}\cup \{\mu\}$ for some $\lambda,\mu\in K_{\rm box}$. If this is not satisfied, then $ \langle \pmb{\cg{\alpha}}|\phi^\dagger(x)\phi(y)|\pmb{\cg{\beta}}\rangle=0$. We recall that $\pmb{\alpha^{(b)}}$ means the set of all particles in $\pmb{\cg{\alpha}}$ that belong to box $b$.

This yields the following expression in terms of the cumulative reduced density matrix in boxes $b_1,b_2$
\begin{equation}
     \mathcal{G}_t(x,y)=\frac{1}{L}\sum_{\alpha,\beta\in K_{\rm box}}{}_{b_1}\langle \alpha| \otimes {}_{b_2}\langle 0|\rho^{\rm cumul}_{b_1,b_2}|0\rangle_{b_1} \otimes |\beta\rangle_{b_2}\,.
\end{equation}
In terms of $\psi_t(\bar{\alpha},\bar{\beta})$, this is thus
\begin{equation}\label{singleparticle}
     \mathcal{G}_t(x,y)=\frac{1}{L}\sum_{q=1}^M \sum_{\substack{\cg{\lambda}\in K_\otimes\\ b(\cg{\lambda})=b_1}}\sum_{\substack{\cg{\mu}\in K_\otimes\\ b(\cg{\mu})=b_2}} \psi_t(\cg{\alpha}_q,\cg{\lambda})\psi_t(\cg{\alpha}_q,\cg{\mu})^*\,,
\end{equation}
where we recall that the $\cg{\alpha}_q$'s are the particles in the initial state.

\section{Hydrodynamic limit \label{hydrosec}}

\subsection{Definition}
In order to define the hydrodynamic limit that we study in this paper, let us introduce typical scales for the parameters of the problem. We define $\tau$ the typical time scale in the problem and $T$ the rescaled time by
\begin{equation}
    t=\tau T\,.
\end{equation}
We define $2\pi\kappa$ the momentum spacing in each of the boxes, namely
\begin{equation}
    \kappa=\frac{n}{L}\,.
\end{equation}
 We define $\Lambda$ the typical momentum of the particles in the wavelet initial state $|\pmb{\cg{\alpha}}\rangle$. We note that we necessarily have $\Lambda\gtrsim\kappa$. We finally define $X$ the rescaled position by
 \begin{equation}
     X=\frac{x}{\tau\Lambda}\,,
 \end{equation}
 with $x\in[0,L]$ the original position, related to the box index $b$ by $b=\lfloor x\kappa\rfloor$.\\

 The typical scales $\tau$ and $\Lambda$ are parameters of the problem that we can choose freely. They correspond to the time scale at which we study the problem and to the particle content of the initial state. The status of $\kappa$ is more subtle. Since it directly depends on the number $n$ of boxes that we divide the system into, it is not per se a physical parameter of the system, but rather parametrizes the precision at which we want to describe the system. However, through the fact that we impose the initial condition to be a wavelet state, it also constrains the minimal typical distance at which the initial state varies.\\ 
 
 We define the \emph{hydrodynamic limit} by the two conditions
 \begin{equation}\label{hydro}
     (i)\quad\tau\kappa^2\to 0\,,\qquad\qquad (ii)\quad\tau\Lambda^2\to\infty\,.
 \end{equation}
The first condition means that we study the system at a time scale at which the energy levels within each box appear continuous. The second condition means that we study the system at ``large" times, the meaning of ``large" being set by the initial state.

These two conditions do not impose any behaviour of $\tau\kappa\Lambda$. As we will see, this quantity is the typical number of boxes that a particle moves through during a time interval of order $\tau$. If $\tau\kappa\Lambda\to 0$, the system will have no time dependence for the rescaled time $T$. If $\tau\kappa\Lambda=\mathcal{O}(1)$, the system is not divided into small enough boxes and will appear discontinuous at scales $X,T$. We will thus assume
 \begin{equation}\label{smooth}
     \tau\kappa\Lambda\to\infty\,,
 \end{equation}
 which will ensure that the system is smooth in terms of the rescaled space and time variable $X,T$.\\

The condition \eqref{hydro} can be met in quite different physical contexts. Let us mention in particular two situations. The first context (i) is that of the \emph{Euler scale}, corresponding to large $x,t$ at fixed $x/t$, which is the one mostly mentioned in the GHD literature. In our notations this means $\tau\to\infty$ and $\Lambda=\mathcal{O}(1)$. In this case the size of the boxes $1/\kappa$ becomes infinitely large. The second context (ii) is a \emph{short-time, high density limit}. Because of the Pauli principle, an initial state with high density of particles will also have a large typical particle momentum $\Lambda$. This context translates thus into $\tau\to 0$ and $\Lambda\to\infty$ at e.g.\ fixed $\tau\Lambda$. The smoothness condition \eqref{smooth} imposes then $\kappa\to\infty$, which is infinitely small boxes. No rescaling of space is needed in this context. It is similar  to a semi-classical expansion $\hbar\to 0$.

\subsection{Generalized Hydrodynamics}
Before taking the hydrodynamic limit of our model, let us very briefly summarize GHD, the hydrodynamic theory describing the Lieb-Liniger model \cite{castro2016emergent,bertini2016transport}. As any hydrodynamic theory, it postulates that as long as local operators are concerned, the state of the system at rescaled coordinates $X,T$ can be considered in an equilibrium state. In the hardcore boson gas, equilibrium states $|\pmb{\lambda}\rangle$ are characterized by a root density $\rho(\lambda)$, defined by the fact that there are $L \rho(\lambda)\D{\lambda}$ particles with momentum between $\lambda$ and $\lambda+\D{\lambda}$ in the thermodynamic limit. GHD provides a partial differential equation for the root density $\rho_{X,T}(\lambda)$ describing the equilibrium state at rescaled coordinates $X,T$. In case of hardcore bosons and in absence of external potentials, it reads
    \begin{equation}
        \partial_{T}\rho_{X,T}(\lambda)+2\lambda \partial_{X}\rho_{X,T}(\lambda)=0\,.
    \end{equation}
The solution can be readily expressed in terms of $\rho_{X,0}(\lambda)$ the initial root densities
\begin{equation}\label{ghdeq}
    \rho_{X,T}(\lambda)=\rho_{X-2\lambda T,0}(\lambda)\,.
\end{equation}
In presence of an external potential $V(X)$, the GHD equations are \cite{doyon2017note,bastianello2019generalized}
\begin{equation}
     \partial_{T}\rho_{X,T}(\lambda)+2\lambda \partial_{X}\rho_{X,T}(\lambda)=\partial_{X} V(X) \partial_\lambda\rho_{X,T}(\lambda)\,.
\end{equation}

\subsection{Hydrodynamic limit of the Green's function}

In order to determine the hydrodynamic limit of the expectation values determined in Section \ref{exact}, let us first determine the hydrodynamic limit of the Green's function $G_t(\cg{\alpha},\cg{\beta})$.

\begin{lemma}\label{hydrodynamicG}
    For $\alpha,\beta\in K_{\rm box}$ of order $\Lambda$, we have the following hydrodynamic limit
    \begin{equation}
        G_t(\cg{\alpha},\cg{\beta})=G_t^{\rm hydro}(\cg{\alpha},\cg{\beta})+\mathcal{O}\left(\frac{\kappa^2 \sqrt{\tau}}{\max(|\alpha-\beta|,\kappa)} \right)\,,
    \end{equation}
    with if $\alpha\neq\beta$
    \begin{equation}
        G_t^{\rm hydro}(\cg{\alpha},\cg{\beta})=
         \frac{\kappa}{i(\alpha-\beta)}\Big[e^{-it\alpha^2} \partial_{b(\cg{\alpha})-b(\cg{\beta}), \lfloor 2t\kappa\alpha\rfloor } -e^{-it\beta^2} \partial_{b(\cg{\alpha})-b(\cg{\beta}), \lfloor 2t\kappa\beta\rfloor} \Big]\,,
    \end{equation}
    and if $\alpha=\beta$
    \begin{equation}\label{gequal}
G_t^{\rm hydro}(\cg{\alpha},\cg{\beta})=e^{-it\alpha^2}(\tilde{\delta}_{b(\cg{\alpha})-b(\cg{\beta})-\lfloor 2t\kappa\alpha+1/2\rfloor}-(2t\kappa\alpha-\lfloor 2t\kappa\alpha+1/2\rfloor) \partial_{b(\cg{\alpha})-b(\cg{\beta}), \lfloor 2t\kappa\alpha\rfloor})\,.
\end{equation}
We used the notation $\lfloor x\rfloor$ for the integer part of $x$, and the discrete derivative defined by
\begin{equation}\label{derivdetal}
    \partial_{i,j}=\tilde{\delta}_{i-j}-\tilde{\delta}_{i-j-1}\,,
\end{equation}
with the $\delta$ function modulo $n$
\begin{equation}
    \tilde{\delta}_j=\delta_{j\,{\rm mod}\,n}(-1)^{j/n}\,.
\end{equation}

\end{lemma}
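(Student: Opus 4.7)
The plan is to recognize $G_t(\cg{\alpha},\cg{\beta})$ as a single-particle free-evolution matrix element in the wavelet basis, and then evaluate it in the hydrodynamic limit via Fresnel/stationary-phase analysis. The key preliminary observation is that $\varphi(\lambda,\cg{\alpha})$ is exactly the overlap between the plane wave $L^{-1/2}e^{i\lambda x}$ on $[0,L]$ and the single-particle wavelet $\sqrt{\kappa}\,e^{i\alpha x}\1_{x\in\text{box }b(\cg{\alpha})}$. Inserting the plane-wave resolution of the identity therefore rewrites
\begin{equation}
G_t(\cg{\alpha},\cg{\beta}) = \kappa\int_0^{1/\kappa}\!\!\int_0^{1/\kappa}\D{s}\D{u}\, e^{-i\alpha s+i\beta u}\, K_t\!\left(\tfrac{\delta b}{\kappa}+s-u\right),
\end{equation}
where $\delta b = b(\cg{\alpha})-b(\cg{\beta})$ and $K_t(z) = L^{-1}\sum_{\lambda\in K} e^{i\lambda z-it\lambda^2}$ is the free Schr\"odinger propagator on $[0,L]$ with anti-periodic boundary conditions.

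Next, I would apply Poisson summation (the method of images) to write $K_t(z) = (4\pi it)^{-1/2}\sum_{k\in\mathbb{Z}}(-1)^k e^{i(z-kL)^2/(4t)}$, the sign $(-1)^k$ encoding anti-periodicity. For $\alpha\neq\beta$, the substitution $w=s-u$ (together with $e^{i(\alpha-\beta)/\kappa}=1$, since $\alpha,\beta\in K_{\rm box}$) reduces the double integral to the partial-fraction form
\begin{equation}
G_t(\cg{\alpha},\cg{\beta}) = \frac{\kappa}{i(\alpha-\beta)}\big[F_\alpha - F_\beta\big],\qquad F_\mu = \int_{(\delta b-1)/\kappa}^{(\delta b+1)/\kappa}\!\!\D{v}\,\sign\!\left(v-\tfrac{\delta b}{\kappa}\right)e^{-i\mu v}K_t(v).
\end{equation}
For each image $k$, completing the square in the phase centres a Gaussian of width $\sqrt{t}$ at $v_k^\star = kL + 2t\mu$. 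In the hydrodynamic limit $\tau\kappa^2\to 0$ this width is negligible compared to the box size $1/\kappa$, so the integral essentially evaluates $\sign(v-\delta b/\kappa)$ at $v=v_k^\star$, contributing $+1$ if $v_k^\star\in(\delta b/\kappa,(\delta b+1)/\kappa)$ and $-1$ if $v_k^\star\in((\delta b-1)/\kappa,\delta b/\kappa)$. Summing over $k$ with weights $(-1)^k$ assembles exactly $\tilde\delta_{\delta b-\lfloor 2t\kappa\mu\rfloor}-\tilde\delta_{\delta b-\lfloor 2t\kappa\mu\rfloor-1}=\partial_{\delta b,\lfloor 2t\kappa\mu\rfloor}$, reproducing the stated expression for $\alpha\neq\beta$.

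For $\alpha=\beta$ partial fractions are unavailable, so I would evaluate the double integral directly by switching to $p=s-u$, $q=s+u$; integrating out $q$ produces a triangular weight $(1/\kappa-|p|)_+$ against the Fresnel kernel. Localizing the Fresnel integral at $p_k^\star=kL+2t\alpha-\delta b/\kappa$ evaluates the tent function there, giving $(1-f)\tilde\delta_{\delta b-N}+f\tilde\delta_{\delta b-N-1}$ with $N=\lfloor 2t\kappa\alpha\rfloor$ and $f=2t\kappa\alpha-N$, which rearranges into the stated $\tilde\delta_{\delta b-\lfloor 2t\kappa\alpha+1/2\rfloor}-(2t\kappa\alpha-\lfloor 2t\kappa\alpha+1/2\rfloor)\partial_{\delta b,\lfloor 2t\kappa\alpha\rfloor}$ once one splits on whether $f<1/2$ or $f\geq 1/2$. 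The error $\mathcal{O}\!\big(\kappa^2\sqrt{\tau}/\max(|\alpha-\beta|,\kappa)\big)$ then tracks the Fresnel transition width $\sqrt{t}$ (giving a relative error $\kappa\sqrt{t}$ per box) combined with the $1/(\alpha-\beta)$ prefactor, which is regulated by $\kappa$ as $\alpha\to\beta$. The main obstacle will be to bound the Fresnel corrections uniformly when the classical trajectory $v_k^\star$ (or $p_k^\star$) lands near a box boundary, and to verify that the $(-1)^k$ from anti-periodicity, together with the integer winding numbers $k$, reassembles cleanly into the modular delta $\tilde\delta_j=\delta_{j\bmod n}(-1)^{j/n}$.
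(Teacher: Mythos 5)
Your argument is correct, and it reaches the stated formulas by a genuinely different route from the paper. The paper stays entirely in momentum space: it splits $K$ into the $n$ sublattices $K_q$, reduces each contribution to sums $\sum_\lambda e^{-it\lambda^2}/\big((\alpha-\lambda)(\beta-\lambda)\big)$, linearizes the quadratic phase around $\lambda=\alpha$ (resp.\ $\beta$) so that the exact lattice identities $\sum_m e^{ix(m+\theta)}/(m+\theta)$ and $\sum_m e^{ix(m+\theta)}/(m+\theta)^2$ apply, bounds the quadratic remainder $R_t$ by a Riemann-sum/Fourier-decay argument, and only then resums over $q$ to produce $\tilde\delta$ and $\partial_{i,j}$. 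You instead recognize $G_t$ as the single-particle propagator matrix element between box-localized plane waves, pass to position space, and apply Poisson summation to get the image expansion $K_t(z)=(4\pi i t)^{-1/2}\sum_k(-1)^k e^{i(z-kL)^2/(4t)}$; the Fresnel localization at $v_k^\star=kL+2t\mu$ then does the work that the exact identities do in the paper, with the $(-1)^k$ and the winding number $k$ directly generating the modular $\tilde\delta$. The two computations are Fourier duals of one another (the paper's identity \eqref{sum1} is itself a Poisson-summation statement), but yours makes the physics far more transparent: the classical trajectories $2t\mu$ wrapping the ring, the antiperiodic sign, and the origin of the floor functions as box-crossing counts are all visible, and your error heuristic (transition width $\sqrt{t}$ against box size $1/\kappa$, with the $1/(\alpha-\beta)$ prefactor saturating at $1/\kappa$) correctly reproduces the stated $\mathcal{O}\big(\kappa^2\sqrt{\tau}/\max(|\alpha-\beta|,\kappa)\big)$. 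I checked the algebra at the points where it could go wrong: the reduction of the double box integral to $\frac{\kappa}{i(\alpha-\beta)}[F_\alpha-F_\beta]$ with the $\sign(v-\delta b/\kappa)$ weight is exact (using $e^{i(\alpha-\beta)/\kappa}=1$), the completion of the square yields the prefactor $e^{-it\mu^2}$ because $e^{-i\mu kL}=1$ for $\mu\in K_{\rm box}$, and your tent-function evaluation $(1-f)\tilde\delta_{\delta b-N}+f\tilde\delta_{\delta b-N-1}$ is identical to the paper's expression \eqref{gequal} after splitting on $f\lessgtr 1/2$. The caveats you flag yourself are the right ones and are shared by the paper's own argument at the same level of rigor: uniformity of the error when a stationary point $v_k^\star$ falls within $\sqrt{t}$ of a box boundary (where the paper's floor functions jump), and the conditional convergence of the image sum over distant $k$, which the paper avoids by using closed-form lattice identities. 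Neither affects the leading-order result.
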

\begin{proof}
  We write
    \begin{equation}
        G_t(\cg{\alpha},\cg{\beta})=\sum_{q=0}^{n-1} \sum_{\lambda\in K_q}\varphi(\lambda,\cg{\alpha})^*\varphi(\lambda,\cg{\beta})e^{-it\lambda^2}\,,
    \end{equation}
with
\begin{equation}\label{kbarq}
 K_q=\left\{\frac{2\pi (nm+q+\tfrac{1}{2})}{L},\quad m\in\mathbb{Z}\right\}\,.
\end{equation}
We have
\begin{equation}
    \sum_{\lambda\in K_q}\varphi(\lambda,\cg{\alpha})^*\varphi(\lambda,\cg{\beta})e^{-it\lambda^2}=\frac{n}{L^2}e^{2i\pi (q+1/2)\frac{b(\cg{\alpha})-b(\cg{\beta})}{n}}4\sin^2(\tfrac{\pi (q+1/2)}{n}) \sum_{\lambda\in K_q}\frac{e^{-it\lambda^2}}{(\alpha-\lambda)(\beta-\lambda)}\,.
\end{equation}
Let us first treat the case $\alpha\neq\beta$. Then
\begin{equation}
    \sum_{\lambda\in K_q}\frac{e^{-it\lambda^2}}{(\alpha-\lambda)(\beta-\lambda)}=\frac{1}{\alpha-\beta}\sum_{\lambda\in K_q}\frac{e^{-it(\lambda+\alpha)^2}-e^{-it(\lambda+\beta)^2}}{\lambda}\,.
\end{equation}
Using
\begin{equation}\label{sum1}
    \sum_{m\in\mathbb{Z}}\frac{e^{ix(m+\theta)}}{m+\theta}=\frac{\pi }{\sin\pi\theta}e^{2i\pi\theta\left( \intg{\frac{x}{2\pi}}+\frac{1}{2}\right)}\,,
\end{equation}
with $\intg{x}$ the integer part of $x$, we decompose it as
\begin{equation}\label{decompo}
    \begin{aligned}
        & \sum_{\lambda\in K_q}\frac{e^{-it\lambda^2}}{(\alpha-\lambda)(\beta-\lambda)}\\
         &=\frac{Le^{-\frac{i\pi (q+1/2)}{n}}}{2n(\alpha-\beta)\sin(\tfrac{\pi (q+1/2)}{n})}\Big( e^{-it\alpha^2-\frac{2i\pi (q+1/2)}{n}\lfloor 2t\kappa\alpha\rfloor}-e^{-it\beta^2-\frac{2i\pi (q+1/2)}{n}\lfloor 2t\kappa\beta\rfloor}\Big)\\
         &+\frac{R_t(\alpha)-R_t(\beta)}{\alpha-\beta}\,,
    \end{aligned}
\end{equation}
with
\begin{equation}
    R_t(\alpha)= e^{-it\alpha^2}\sum_{\lambda\in K_q} \frac{(e^{-it\lambda^2}-1)e^{-2it\lambda\alpha}}{\lambda}\,.
\end{equation}
In the sum, in the hydrodynamic limit the values $\sqrt{t}\lambda$ are spaced by an order $\sqrt{\tau}\kappa$ which goes to $0$ from condition (i) of \eqref{hydro}. Hence the sum can be seen as $1/\kappa$ times a Riemann sum for $\sqrt{t}\lambda$. Since $u\mapsto\frac{e^{-iu^2}-1}{u}e^{-2i\sqrt{t}u\alpha}$ is square integrable but not its derivative, the Riemann sum converges as $\mathcal{O}(\kappa\sqrt{\tau})$. This yields
\begin{equation}
    R_t(\alpha)=e^{-it\alpha^2}\frac{L}{2\pi n}\int_{-\infty}^\infty \frac{e^{-iu^2}-1}{u}e^{-2i\sqrt{t}u\alpha}\D{u}+\mathcal{O}(\sqrt{\tau})\,.
\end{equation}
Moreover, since $u\mapsto\frac{e^{-iu^2}-1}{u}$ is square integrable but not its derivative, its Fourier transform as a function of $\omega$ decays as $1/\omega$. Hence because $\sqrt{t}$ scales as $\sqrt{\tau}$ and $\alpha$ as $\Lambda$ in the hydrodynamic limit, we have
\begin{equation}
    R_t(\alpha)=\mathcal{O}(\frac{1}{\kappa\sqrt{\tau}\Lambda})+\mathcal{O}(\sqrt{\tau})\,.
\end{equation}
Which of these two terms dominates is exactly given by the behaviour of $\tau\kappa\Lambda$, which is not constrained by the physical hydrodynamic behaviour of the system but by the precision with which we describe it. With our smoothness convention \eqref{smooth}, the term $\mathcal{O}(\sqrt{\tau})$ dominates. In either case, $R_t(\alpha)$ is subleading in \eqref{decompo} from both conditions of \eqref{hydro}. Summing over $q$ in \eqref{sum1} we obtain if $m(\alpha)\neq m(\beta)$
\begin{equation}
\begin{aligned}
    &G_t(\cg{\alpha},\cg{\beta})=\frac{n}{iL(\alpha-\beta)}\Big[e^{-it\alpha^2} \partial_{b(\cg{\alpha})-b(\cg{\beta}), \intg{2t\kappa\alpha}} -e^{-it\beta^2} \partial_{b(\cg{\alpha})-b(\cg{\beta}), \intg{2t\kappa\beta}} \Big]+\mathcal{O}\left(\frac{\kappa^2 \sqrt{\tau}}{\alpha-\beta} \right)\,,
\end{aligned}
\end{equation}
with the notation \eqref{derivdetal}. Let us now consider the case $\alpha=\beta$. In this case we have for $q\neq 0$
\begin{equation}
    \sum_{\lambda\in K_q}\varphi(\lambda,\cg{\alpha})^* \varphi(\lambda,\cg{\beta})e^{-it\lambda^2}=\frac{n}{L^2}e^{2i\pi (q+1/2)\frac{b(\cg{\alpha})-b(\cg{\beta})}{n}}4\sin^2(\tfrac{\pi (q+1/2)}{n}) \sum_{\lambda\in K_q}\frac{e^{-it\lambda^2}}{(\alpha-\lambda)^2}\,.
\end{equation}
Using
\begin{equation}\label{sum2}
    \sum_{m\in\mathbb{Z}}\frac{e^{ix(m+\theta)}}{(m+\theta)^2}=\left(\frac{\pi }{\sin\pi\theta}\right)^2 e^{2i\pi \theta \intg{\frac{x}{2\pi}+\frac{1}{2}}}+\frac{i\pi}{\sin \pi\theta}\left(x-2\pi \intg{\tfrac{x}{2\pi}+\tfrac{1}{2}}\right)e^{2i\pi\theta\left( \intg{\frac{x}{2\pi}}+\frac{1}{2}\right)}\,,
\end{equation}
we have
\begin{equation}
\begin{aligned}
     &\sum_{\lambda\in\bar{K}_q}\varphi(\lambda,\cg{\alpha})^* \varphi(\lambda,\cg{\beta})e^{-it\lambda^2}=e^{2i\pi (q+1/2)\frac{b(\cg{\alpha})-b(\cg{\beta})}{n}} \frac{e^{-it\alpha^2}}{n}\\
     &\times\left[e^{-\frac{2i\pi (q+1/2)}{n}\intg{2t\kappa\alpha+1/2}}-(2t\kappa\alpha-\intg{2t\kappa\alpha+1/2})(1-e^{-\frac{2i\pi}{n}(q+1/2)}) e^{-\frac{2i\pi (q+1/2)}{n}\intg{2t\kappa\alpha}} \right]\\
     &+\mathcal{O}(\sqrt{\tau})\,.
\end{aligned}
\end{equation}
Performing the sum over $q$ yields when $\alpha=\beta$
\begin{equation}
\begin{aligned}
&G_t(\cg{\alpha},\cg{\beta})=e^{-it\alpha^2}\delta_{b(\cg{\alpha})-b(\cg{\beta})-\intg{2t\kappa\alpha+1/2}}-e^{-it\alpha^2}(2t\kappa\alpha-\intg{2t\kappa\alpha+1/2})\partial_{b(\cg{\alpha})-b(\cg{\beta}),\intg{2t\kappa\alpha}}\\
&+\mathcal{O}(\kappa\sqrt{\tau})\,.
\end{aligned}
\end{equation}
These are the expressions given in the Lemma.
\end{proof}

\subsection{Hydrodynamic behaviour in absence of chemical potentials}
We now would like to study the behaviour of the system out of equilibrium in the hydrodynamic limit. In absence of chemical potentials, we have $\psi_t(\cg{\alpha},\cg{\beta})=G_t(\cg{\alpha},\cg{\beta})$. 

\subsubsection{Particle content \label{particlecontent}}
Let us first probe the particle content of box $b$ at time $t$. To that end, we consider the observable $\mathcal{O}_b(w)$ introduced in \eqref{obw} for a function $w(\beta)$. Its expectation value in an equilibrium state in box $b$ with root density $\rho$ is given by $\int w(\beta)\rho(\beta) \D{\beta}$, allowing one to probe the value of the root density $\rho$ in each box. With the expression \eqref{obwcumul}, Theorem \ref{theocumul} and the fact that $\psi_t(\cg{\alpha},\cg{\beta})=G_t(\cg{\alpha},\cg{\beta})$, we have at all times $t$
\begin{equation}
    \langle \mathcal{O}_b(w)\rangle_t=\sum_{q=1}^N\sum_{\substack{\cg{\beta}\in K_\otimes\\ b(\cg{\beta})=b}} w(\beta)|G_t(\cg{\alpha}_q,\cg{\beta})|^2\,,
\end{equation}
where we recall that $\pmb{\cg{\alpha}}=\{\cg{\alpha}_1,...,\cg{\alpha}_N\}$ is the initial wavelet state. In the hydrodynamic limit, this suggests to introduce for a fixed function $w(\beta)$
    \begin{equation}\label{palpha}
     P^{\rm hydro}_w(\cg{\alpha};0)  =\sum_{\substack{\cg{\beta}\in K_\otimes\\ b(\cg{\beta})=b}} w(\beta)|G_t^{\rm hydro}(\cg{\alpha},\cg{\beta})|^2\,,
    \end{equation}
for $\cg{\alpha}\in K_\otimes$, with $G_t^{\rm hydro}$ defined in Lemma \ref{hydrodynamicG}. Let us evaluate the hydrodynamic limit of this quantity. Importantly, we will assume that $w(\beta)$ varies only at scale much larger than $\kappa$ the typical spacing between energy levels in a box. This means we will assume
\begin{equation}
    \frac{w'(\beta)}{w(\beta)}=o(\kappa^{-1})\,.
\end{equation}
We note that this means the Fourier transform of $w(\beta)$ varies on a scale that is much smaller than the size of a box $1/\kappa$.

Using the expression for $G_t^{\rm hydro}$, we have
    \begin{equation}
    \begin{aligned}
       P^{\rm hydro}_w(\cg{\alpha};0)=&w(\alpha)\delta_{b(\cg{\alpha})-b-\intg{2t\kappa\alpha+1/2}}(1-2|2t\kappa\alpha-\intg{2t\kappa\alpha+1/2}|)\\
        &+w(\alpha)(2t\kappa\alpha-\intg{2t\kappa\alpha+1/2})^2 (\partial_{b(\cg{\alpha})-b, \intg{2t\kappa\alpha}})^2\\
        &+\frac{n^2}{L^2}\sum_{\substack{\beta \in K_{\rm box}\\ \beta\neq\alpha}} \frac{w(\beta)}{(\alpha-\beta)^2}|e^{-it\alpha^2} \partial_{b(\cg{\alpha})-b, \intg{2t\kappa\alpha}} -e^{-it\beta^2} \partial_{b(\cg{\alpha})-b, \intg{2t\kappa\beta}}|^2\,.
    \end{aligned}
    \end{equation}
We used that
\begin{equation}
    \tilde{\delta}_{j-\intg{x+1/2}}\partial_{j,\intg{x}}=\sign(x-\intg{x+1/2})\delta_{j-\intg{x+1/2}}\,.
\end{equation}
Writing $\alpha$ and $\beta$ in terms of $\kappa$ times integers, we have that only finite integer difference (but that can be arbitrarily large) will contribute to the sum in the hydrodynamic limit, because of the factor $\frac{1}{(\alpha-\beta)^2}$. But then we have $\intg{2t\kappa\alpha}=\intg{2t\kappa\beta}$, from assumption (ii) of \eqref{hydro}. Since $w'/w$ is of order $o(\kappa^{-1})$, approximating $w(\beta)$ by $w(\alpha)$ comes with an error $o(1)$ which goes to $0$ in the hydrodynamic limit. Hence

    \begin{equation}
    \begin{aligned}
        P^{\rm hydro}_w(\cg{\alpha};0)=&w(\alpha)\delta_{b(\cg{\alpha})-b-\intg{2t\kappa\alpha+1/2}}(1-2|2t\kappa\alpha-\intg{2t\kappa\alpha+1/2}|)\\
        &+w(\alpha)(2t\kappa\alpha-\intg{2t\kappa\alpha+1/2})^2 (\partial_{b(\cg{\alpha})-b, \intg{2t\kappa\alpha}})^2\\
        &+\frac{w(\alpha)(\partial_{b(\cg{\alpha})-b, \intg{2t\kappa\alpha}})^2}{4\pi^2}\sum_{m\neq 0} \frac{1}{m^2}|1-e^{4i\pi t\kappa\alpha m}|^2+o(1)\,.
    \end{aligned}
    \end{equation}
Using \eqref{sum2} in the limit $\theta\to 0$ we have
\begin{equation}
    \sum_{m\neq 0}\frac{e^{ixm}}{m^2}=\frac{\pi^2}{3}-\pi|x-2\pi \intg{\tfrac{x}{2\pi}+\tfrac{1}{2}}|+\frac{1}{2}(x-2\pi \intg{\tfrac{x}{2\pi}+\tfrac{1}{2}})^2\,.
\end{equation}
This yields
 \begin{equation}
    \begin{aligned}
        &P^{\rm hydro}_w(\cg{\alpha};0)=w(\alpha)\delta_{b(\cg{\alpha})-b-\intg{2t\kappa\alpha+1/2}}\\
        &-w(\alpha)|2t\kappa\alpha-\intg{2t\kappa\alpha+1/2}|(2\delta_{b(\cg{\alpha})-b-\intg{2t\kappa\alpha+1/2}}-\delta_{b(\cg{\alpha})-b-\intg{2t\kappa\alpha}}-\delta_{b(\cg{\alpha})-b-\intg{2t\kappa\alpha}-1})\,.
    \end{aligned}
    \end{equation}
    It can be rewritten as
    \begin{equation}\label{respalpha}
       P^{\rm hydro}_w(\cg{\alpha};0)=w(\alpha) (1-|b(\cg{\alpha})-b-2t\kappa\alpha|) \pmb{1}_{b(\cg{\alpha})-b-1<2t\kappa\alpha<b(\cg{\alpha})-b+1}\,.
    \end{equation}
    This contribution of a single initial particle $\cg{\alpha}$ to the root density is plotted in the left panel of Fig \ref{interference}. It can be exactly interpreted as the contribution of a "classical" particle initially located in box $b(\cg{\alpha})$ and moving from box to box with velocity $2\alpha$. This precisely corresponds to the interpretation of the GHD equations for the hardcore boson gas.

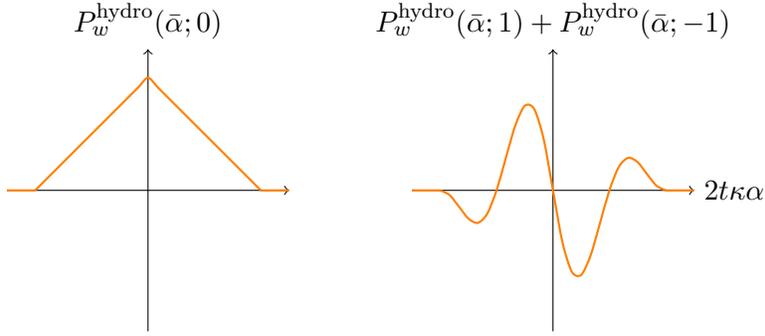
\begin{figure}[H]
\begin{center}
    \begin{tikzpicture}[scale=1.5]
  \draw[->] (-1.25, 0) -- (1.25, 0) node[right]{} ;
  \draw[->] (0, -1.25) -- (0, 1.25) node[above] {$P^{\rm hydro}_w(\cg{\alpha};0)$};
  \draw[domain=-1:1, smooth, variable=\x, orange, thick] plot ({\x}, {1-abs(\x)});
  \draw[domain=1:1.25, smooth, variable=\x, orange, thick] plot ({\x}, {0});
  \draw[domain=-1.25:-1., smooth, variable=\x, orange, thick] plot ({\x}, {0});
\end{tikzpicture}
\hspace{0.5cm}
    \begin{tikzpicture}[scale=1.5]
  \draw[->] (-1.25, 0) -- (1.25, 0) node[right] {$2t\kappa\alpha$};
  \draw[->] (0, -1.25) -- (0, 1.25) node[above] {$P^{\rm hydro}_w(\cg{\alpha};1)+P^{\rm hydro}_w(\cg{\alpha};-1)$};
  \draw[domain=-1:1, smooth, variable=\x, orange, thick] plot ({\x}, {-(1-abs(\x))*sin(90*2*2*\x)});
  \draw[domain=1:1.25, smooth, variable=\x, orange, thick] plot ({\x}, {0});
  \draw[domain=-1.25:-1., smooth, variable=\x, orange, thick] plot ({\x}, {0});
\end{tikzpicture}
\end{center}
\caption{$P^{\rm hydro}_w(\cg{\alpha};0)$ (\eqref{palpha}, left) and $P^{\rm hydro}_w(\cg{\alpha};1)+P^{\rm hydro}_w(\cg{\alpha};-1)$ (\eqref{palphadelta}, right) as a function of $2t\kappa\alpha$. The left panel represents the contribution of initial particle $\cg{\alpha}$ to the root density in the hydrodynamic limit, and the right panel to an ``anomalous" correlation function that vanishes at equilibrium. In presence of several initial particles, the function on the right will undergo destructive interference.}\label{interference}
\end{figure}
    
\subsubsection{Local relaxation: general remarks \label{generalremarks}}
We showed in Section \ref{particlecontent} that the particle content of box $b$ in the hydrodynamic limit exactly follows the GHD equations. The effective root density $\rho_{b,t}$ in box $b$ at time $t$ is then given by
\begin{equation}\label{effectiveroot}
    \rho_{b,t}(\cg{\beta})=\sum_{q=1}^N |\psi_t(\cg{\alpha}_q,\cg{\beta})|^2\,,
\end{equation}
where $\cg{\beta}\in K_\otimes$ with $b(\cg{\beta})=b$. We say it is ``effective" since it is only  defined by the fact that the expectation value of $\mathcal{O}_b(w)$ at time $t$ is $\sum_{\beta}\rho_b(\beta)w(\beta)$, as in an equilibrium state. But for it to really be the root density describing local observables in box $b$, one needs as well \emph{local relaxation}. Let us take the example of the connected two-point function of the density operator defined in \eqref{2pt}. Its expectation value is
\begin{equation}
\begin{aligned}
     \langle \mathcal{O}_b\rangle=\frac{1}{L^2}\sum_{h_1,h_2,p_1,p_2\in K_{\rm box}} \sum_{q_1,q_2=1}^N &e^{ix(p_1-h_1)+iy(p_2-h_2)}\\
    & \times \det \left(\begin{matrix}
    \psi_t(\cg{\alpha}_{q_1},\cg{h}_1)\psi_t(\cg{\alpha}_{q_1},\cg{p}_1)^*& \psi_t(\cg{\alpha}_{q_1},\cg{h}_1)\psi_t(\cg{\alpha}_{q_2},\cg{p}_1)^*\\
    \psi_t(\cg{\alpha}_{q_2},\cg{h}_2)\psi_t(\cg{\alpha}_{q_1},\cg{p}_2)^*& \psi_t(\cg{\alpha}_{q_2},\cg{h}_2)\psi_t(\cg{\alpha}_{q_2},\cg{p}_2)^*
    \end{matrix}\right)\,.
\end{aligned}
\end{equation}
This can be written as
\begin{equation}\label{correspondto}
\begin{aligned}
     \langle \mathcal{O}_b\rangle=&\langle \psi^\dagger(x)\psi(x)\rangle_t \langle \psi^\dagger(y)\psi(y)\rangle_t\\
    &-\frac{1}{L^2} \sum_{q_1,q_2=1}^N\sum_{\delta,\delta'\in K}  e^{iy\delta+ix\delta'} P_w(\cg{\alpha}_{q_1};\delta) P_{w^*}(\cg{\alpha}_{q_2};\delta')\,,
\end{aligned}
\end{equation}
with $w(\beta)=e^{i\beta(y-x)}$ and where  we introduced
\begin{equation}
    P_w(\cg{\alpha};\delta)=\sum_{{\substack{\cg{\beta}\in K_\otimes\\b(\cg{\beta})=b}}}w(\beta)G^{\rm hydro}_t(\cg{\alpha},\cg{\beta})G^{\rm hydro}_t(\cg{\alpha},\cg{\beta}+2\pi\kappa\delta)^*\,,
\end{equation}
where $\cg{\beta}+2\pi\kappa\delta\in K_\otimes$ denotes the wavelet in same box as $\cg{\beta}$ but with momentum $\beta+2\pi\kappa\delta$. In an equilibrium state $|\pmb{\beta}\rangle$, this expectation value would be
\begin{equation}
     \langle\pmb{\beta}| \mathcal{O}_b|\pmb{\beta}\rangle=\langle \psi^\dagger(x)\psi(x)\rangle_t \langle \psi^\dagger(y)\psi(y)\rangle_t-\frac{1}{L^2}\sum_{\beta,\beta'\in\pmb{\beta}}w(\beta)w^*(\beta')\,.
\end{equation}
Hence we see that this formula \emph{does not} correspond to   \eqref{correspondto} with the effective root density \eqref{effectiveroot}. The effective root density would only capture the terms with $\delta=\delta'=0$. To show local relaxation to an equilibrium state in box $b$ in the hydrodynamic limit, one thus needs to show that the terms for which $\delta\neq 0$ or $\delta'\neq 0$ vanish in the hydrodynamic limit.

\subsubsection{Local relaxation: simplest case \label{simple}}
Let us define the hydrodynamic limit of $P_w(\cg{\alpha};\delta)$
\begin{equation}\label{palphadelta}
    P^{\rm hydro}_w(\cg{\alpha};\delta)=\sum_{{\substack{\cg{\beta}\in K_\otimes\\b(\cg{\beta})=b}}}w(\beta)G^{\rm hydro}_t(\cg{\alpha},\cg{\beta})G^{\rm hydro}_t(\cg{\alpha},\cg{\beta}+2\pi\kappa\delta)^*\,,
\end{equation}
for a fixed function $w(\beta)$ that varies on a scale larger than $1/\kappa$. We note that for the example of the density $2$-point function of Section \ref{generalremarks} where we have $w(\beta)=e^{i\beta(y-x)}$, this condition means that the distance separation $y-x$ is much smaller than the size of the box $1/\kappa$. 

We write
\begin{equation}\label{pdalpha}
\begin{aligned}
 P^{\rm hydro}_w(\cg{\alpha};\delta)=&w(\alpha)G_t(\cg{\alpha},\cg{\alpha})G_t(\cg{\alpha},\cg{\alpha}+2\pi\kappa\delta)^*+w(\alpha-2\pi\kappa\delta)G_t(\cg{\alpha},\cg{\alpha}-2\pi\kappa\delta)G_t(\cg{\alpha},\cg{\alpha})^* \\
&+\sum_{\substack{\cg{\beta}\in K_\otimes\\ b(\cg{\beta})=b\\ \beta\neq \alpha,\alpha-2\pi\kappa\delta}} w(\beta)G_t(\cg{\alpha},\cg{\beta})G_t(\cg{\alpha},\cg{\beta}+2\pi\kappa\delta)^*
\end{aligned}
\end{equation}
and use the expression for $G_t^{\rm hydro}(\cg{\alpha},\cg{\beta})$ in Lemma \ref{hydrodynamicG}. Let us first focus on the sum, that we denote $S$. We have
\begin{equation}
\begin{aligned}
    S=\sum_{\substack{\beta\in K_{\rm box}\\ \beta\neq \alpha,\alpha-2\pi\kappa\delta}} &\frac{w(\beta)\kappa^2}{(\alpha-\beta)(\alpha-\beta-2\pi\kappa\delta)}\Big[e^{-it\alpha^2} \partial_{b(\cg{\alpha})-b, \lfloor 2t\kappa\alpha\rfloor } -e^{-it\beta^2} \partial_{b(\cg{\alpha})-b, \lfloor 2t\kappa\beta\rfloor} \Big]\\
    &\times\Big[e^{it\alpha^2} \partial_{b(\cg{\alpha})-b, \lfloor 2t\kappa\alpha\rfloor } -e^{it(\beta+2\pi\kappa\delta)^2} \partial_{b(\cg{\alpha})-b, \lfloor 2t\kappa(\beta+2\pi\kappa\delta)\rfloor} \Big]\,.
\end{aligned}
\end{equation}
The terms with $\alpha-\beta$ much larger than $\kappa$ go to zero, whereas the terms with $\alpha-\beta=\mathcal{O}(\kappa)$ are of order $\mathcal{O}(1)$. Hence only the terms for which $\alpha-\beta=\mathcal{O}(\kappa)$ contribute. But then, we have $\intg{2t\kappa\alpha}=\intg{2t\kappa\beta}=\intg{2t\kappa(\beta+2\pi\delta\kappa)}$ because $t\kappa^2\to 0$ from \eqref{smooth}, and $w(\beta)=w(\alpha)+o(1)$. Hence
\begin{equation}
    S=w(\alpha)\kappa^2 \partial_{b(\cg{\alpha})-b,\intg{2t\kappa\alpha}}^2 \sum_{\beta\in K_{\rm box}}\frac{(1-e^{it(\alpha^2-\beta^2)})(1-e^{it((\beta+2\pi\kappa\delta)^2-\alpha^2)}}{(\alpha-\beta)(\alpha-\beta-2\pi\kappa\delta)}+o(1)\,.
\end{equation}
Writing $\beta$ and $\alpha$ in terms of integers, and using $\kappa^2 t\to 0$ from \eqref{hydro}, this yields
\begin{equation}
    S=\frac{w(\alpha)\partial_{b(\cg{\alpha})-b,\intg{2t\kappa\alpha}}^2}{4\pi^2}\sum_{m\in\mathbb{Z}}\frac{2-e^{4i\pi\kappa t \alpha(m+\delta)}-e^{4i\pi\kappa t \alpha m}}{m(m+\delta)}+o(1)\,.
\end{equation}
Using \eqref{sum1} in the limit $\theta\to 0$ we get
\begin{equation}
    \sum_{m\neq 0}\frac{e^{ixm}}{m}=i\pi(1-|x|)\sign(x)\,,\qquad -\pi\leq x<\pi\,.
\end{equation}
This yields
\begin{equation}
    S=o(1)\,.
\end{equation}
Coming back to \eqref{pdalpha} and using \eqref{gequal} we obtain
\begin{equation}
\begin{aligned}
 P^{\rm hydro}_w(\cg{\alpha};\delta)=&\frac{w(\alpha)}{i\pi\delta}(\delta_{b(\cg{\alpha})-b-\intg{2t\kappa\alpha+1/2}}-|2t\kappa\alpha-\intg{2t\kappa\alpha+1/2}|(\delta_{b(\cg{\alpha})-b-\intg{2t\kappa\alpha}}+\delta_{b(\cg{\alpha})-b-\intg{2t\kappa\alpha}-1}))\\
&\times(1-e^{4i\pi t \kappa\delta\alpha})\sign(2t\kappa\alpha-\intg{2t\kappa\alpha+1/2})\,.
\end{aligned}
\end{equation}
This contribution of a single initial particle $\cg{\alpha}$ is plotted in the right panel of Fig \ref{interference}. Contrary to $ P^{\rm hydro}_w(\cg{\alpha};0)$, these contributions $ P^{\rm hydro}_w(\cg{\alpha};\delta)$ vanish when integrated over $\cg{\alpha}$. Hence when several particles are included in the initial state with close momenta, these contributions vanish.

\subsubsection{Local relaxation: summary}
Let us summarize our findings. In Section \ref{particlecontent} we showed that in the hydrodynamic limit defined in \eqref{hydro}, the effective root density (and so the expectation value of any conserved charges) follows from the GHD equations \emph{particle by particle}. This means that as long as one is interested in conserved charges, \emph{no large number of particles} is required for the GHD description to be valid. 

Now, the usual meaning of hydrodynamics is that a state can be considered to be in an equilibrium state as far as expectation values of local observables are concerned. This means more than just having the right expectation values of conserved charges, as explained in Section \ref{generalremarks}, since these do not fix all correlations functions in an out-of-equilibrium state (contrary to equilibrium). In Section \ref{simple}, we showed that this local relaxation is not achieved for single particles, since there are some terms that do not vanish in the hydrodynamics limit \eqref{hydro} whereas they would vanish in an equilibrium state. However, these terms have zero integral and so will undergo destructive interference when several particles are included in the initial state. Hence, to summarize, local relaxation requires a large number of particles to occur, but expectation values of conserved charges \emph{do not} require a large number of particles to be described by GHD in the limit \eqref{hydro}.

\section{Algorithm for simulation of hardcore bosons \label{algosec}}
\subsection{The algorithm}
Theorem \ref{eith} allows for an efficient algorithm to simulate hardcorse bosons in 1D. According to this result, the function $\psi_t(\cg{\alpha},\cg{\beta})$ fully characterizes the state during the out-of-equilibrium protocol described in Section \ref{problemsetting}. Expectation values of local observables can be expressed directly in terms of this function, according to Theorem \ref{theocumul}. This function $\psi_t(\cg{\alpha},\cg{\beta})$ represents the state in a way that is mixed between real space (through the dependence in the box indices $b(\cg{\alpha}),b(\cg{\beta})\in\{1,...,n\}$ of $\cg{\alpha},\cg{\beta}\in K_\otimes$) and momentum space (through the dependence in the momenta $\alpha,\beta\in K_{\rm box}$). This mixed representation is key to the efficiency of the algorithm. Its resolution in space is well adapted to inhomogeneous settings and the computation of local observables, while its resolution in momentum is well adapted to compute the time evolution of the state efficiently and with good precision.\\

In practice, we consider an even number of boxes $n$ and define $K_\otimes^{(P)}$ for $P$ an odd integer a truncated version of $K_\otimes$
\begin{equation}
    K_\otimes^{(P)}= \left\{\left(\frac{2\pi nm}{L},b\right),\quad m\in\{-(P-1)/2,...,(P-1)/2\}\,,\quad b\in\{1,...,n\}\right\}\,,
\end{equation}
and $K^{(P)}$ a truncated version of $K$
\begin{equation}
   K^{(P)}= \left\{\frac{2\pi (m+1/2)}{L},\quad m\in\{-nP/2,...,nP/2-1\}\right\}\,,
\end{equation}
and see $\psi_t(\cg{\alpha},\cg{\beta})$ at fixed $t$ as a $nP\times nP$ matrix. For a given time step $\delta t$, we compute its time evolution iteratively through
\begin{equation}\label{ierpsi}
    \psi_{t+\delta t}(\cg{\alpha},\cg{\beta})=\sum_{\cg{\gamma}\in K^{(P)}_\otimes} \mathcal{U}(\cg{\alpha},\cg{\gamma})\psi_t(\cg{\gamma},\cg{\beta})\,,
\end{equation}
with the unitary matrix
\begin{equation}
    \mathcal{U}(\cg{\alpha},\cg{\beta})=\sum_{\lambda\in K^{(P)}}\varphi^{(P)}(\lambda,\cg{\alpha})^*\varphi^{(P)}(\lambda,\cg{\beta})e^{-i\delta t\lambda^2}e^{-i\delta t v_{b(\cg{\beta})}}\,.
\end{equation}
Here, we defined $\varphi^{(P)}(\lambda,\cg{\alpha})$ as the $nP\times nP$ unitary matrix obtained by applying a QR decomposition to the  $nP\times nP$ matrix $(\varphi(\lambda,\cg{\alpha}))_{\lambda\in K^{(P)},\cg{\alpha}\in K_\otimes^{(P)}}$. The iteration \eqref{ierpsi} can be implemented as a simple matrix multiplication. The exact time evolution is recovered in the limits $\delta t\to 0$ and $P\to \infty$.\\

Some comments on the QR decomposition are in order. Firstly (i), the unitary matrix $\varphi^{(P)}(\lambda,\cg{\alpha})$ obtained from a QR decomposition is unique only up to a right multiplication by a diagonal matrix with modulus $1$ coefficients. But this corresponds to conjugating $\psi_t$ with this diagonal unitary matrix, which has no effects on the expectation values of observables. Secondly (ii), the QR decomposition while preserving the unitarity of $\mathcal{U}$ breaks parity symmetry, namely the resulting $\mathcal{U}$ is not invariant under simultaneous reversing of the box indices $b\to n+1-b$ and momenta $\alpha,\beta\to-\alpha,-\beta$, whereas it would be so without the QR decomposition. This will introduce a slight left/right asymmetry in the results even if the initial condition is parity symmetric. This symmetry is however restored in the limit $P\to\infty$.

\subsection{Quantum Newton cradle setup}
\subsubsection{Protocol}
As an example of application of the algorithm, we consider a quantum Newton cradle setup \cite{kinoshita2006quantum,van2016separation}. At time $t<0$, the system of $N$ particles is initialized in the ground state of the Tonks-Girardeau model with a harmonic potential
\begin{equation}\label{harmonic}
    V(x)=\omega^2 x^2\,.
\end{equation}
Then at time $t=0$ a "Bragg pulse" is applied to the system, that is a potential given by
\begin{equation}\label{bragg}
    V_{\rm pulse}(x)=A\cos(qx)\,,
\end{equation}
with a large space frequency $q$ and large amplitude $A$, for a short time window $\Delta t_{\rm pulse}$. Then the system is time evolved with the original potential \eqref{harmonic}. 

\subsubsection{Numerical implementation \label{braggsection}}
In our setting, the initial state is necessarily a tensor product of energy eigenstates in each of the boxes, which cannot encode the ground state of the Tonks-Girardeau model with the potential \eqref{harmonic}. We thus have to include a first initial stage to prepare this ground state, before applying the Bragg pulse. We proceed as follows. We initialize the system with one particle in the zero momentum space of each of the $n/4$ boxes from box $3n/8$ to $5n/8$. Numerically, we take $n=256$ boxes (so $N=64$ particles) and set the system size to $L=32$. Then from $t=0$ to $t=t_0$ we slowly change the potential from an infinite (or large) square well to the potential \eqref{harmonic}. At time $t=0$, the wave function relaxes quickly to the ground state of the square well potential, which then according to the adiabatic theorem remains in the ground state of the time-dependent potential, provided the potential changes slowly enough. Specifically, we apply the following potential for $0<t<t_0$
\begin{equation}
    V_t(x)=\omega^2 x^{2t/t_0} |8x|^{10(t_0-t)}\,,
\end{equation}
with $x=\tfrac{b-1/2}{n}-\tfrac{1}{2}$ where $b\in\{1,...,n\}$ is the box index, and with the numerical values $t_0=2$ and $\omega^2=16000$. Then, at time $t_0$ and onward, we fix the potential to be \eqref{harmonic}. 

To modelize the Bragg pulse, we consider the instantaneous limit with $\Delta t_{\rm pulse}\to 0$ and $A\to\infty$ at fixed $A\Delta t_{\rm pulse}$, as in \cite{van2016separation}. Then the effect of the pulse is at $t=t_0$ to apply on the wave function the operator $e^{-i\Delta t_{\rm pulse} \int \D{x} V_{\rm pulse}(x) \phi^\dagger(x)\phi(x)\D{x}}$. Since our setup requires the potential to be constant in each box, we approximate $V_{\rm pulse}$ to be constant equal to $A$ on even boxes, and to $-A$ on odd boxes. Numerically, we take $A\Delta t_{\rm pulse}=\pi/2$. 

Finally, for comparison purposes, we carry out the same simulation with a quartic potential $V_{\rm quartic}(x)\propto x^4$, with a proportionality constant chosen such that $V_{\rm quartic}(1/6)=V(1/6)$. 

We present the numerical results in Fig \ref{braggpulse} and \ref{braggpulse2}. In Fig \ref{braggpulse} we measure the number of particles per unit length at the middle of the system, averaging over $8$ boxes, and plot it as a function of time. We observe a large number of oscillations with a large amplitude for the harmonic potential. With the quartic potential, we see that the oscillations are immediately damped.

In Fig \ref{braggpulse2} we present the entire profile of expectation value of the particle number as a function of the box index $b$, at different times. We display as well the mode occupation number for the $P$ box momenta $\beta\in \{\tfrac{2\pi n p}{L},p=-(P-1)/2,...,(P-1)/2\}$, averaged over all the $n$ boxes.

\begin{figure}[H]
\begin{center}

\includegraphics[scale=1]{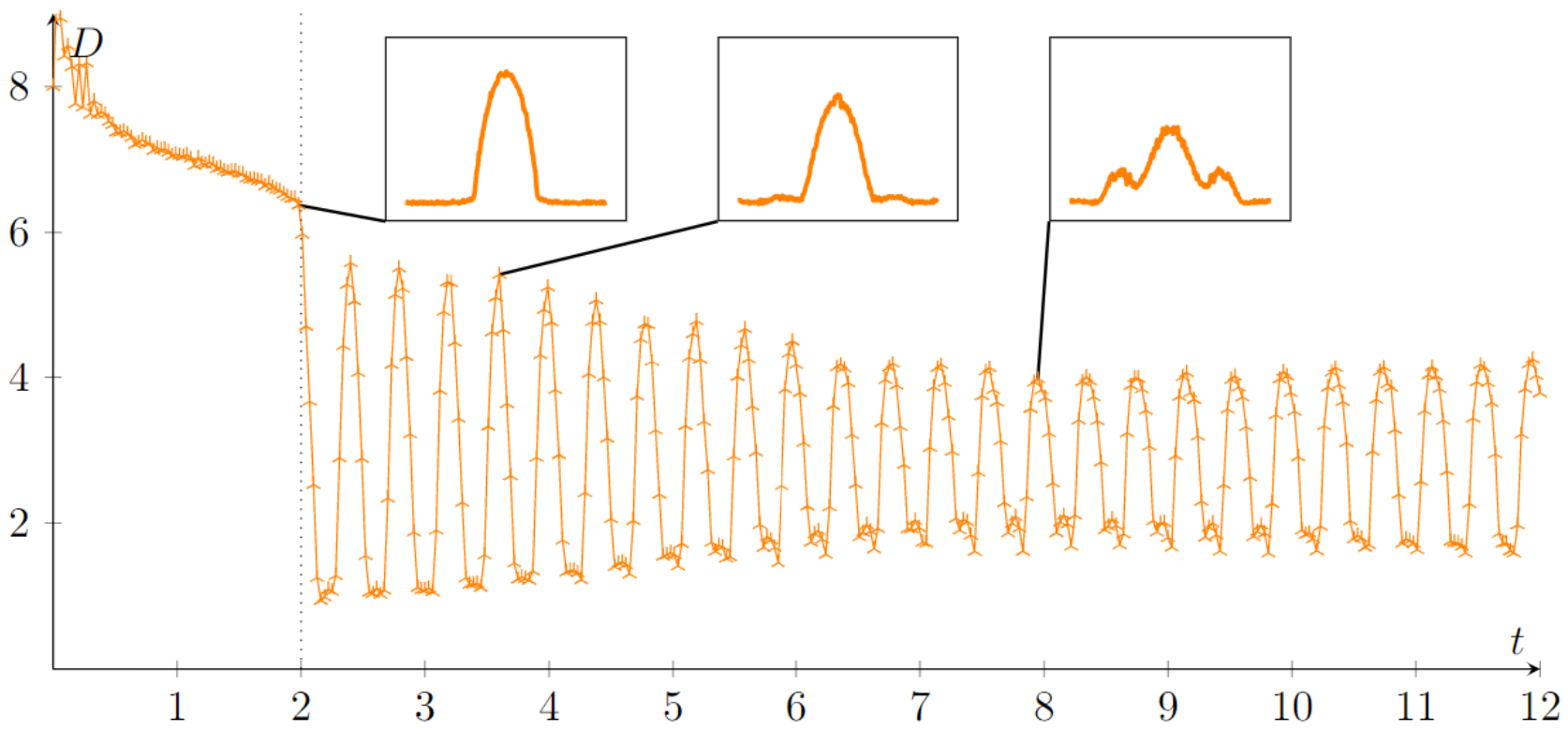}
\includegraphics[scale=1]{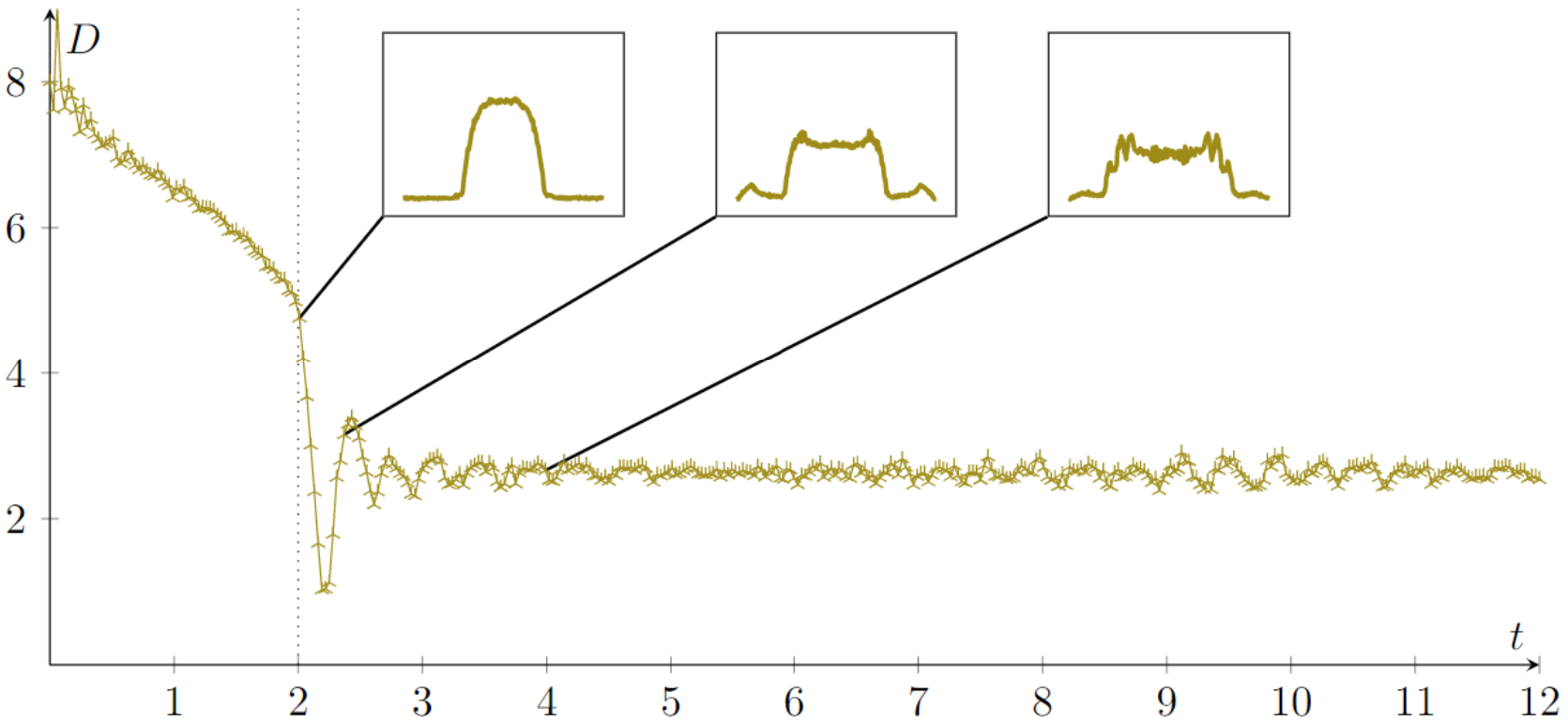}

\caption{\emph{Quantum Newton cradle setup}. Number of particles per unit length at the middle of the system as a function of time, for the setup described in Section \ref{braggsection}, with the quadratic potential ({\color{orange}orange}, top panel) and the quartic potential ({\color{olive}green}, bottom panel). The insets show the profile of the expectation value of the particle number in each box, as a function of the box index, at times indicated by the black line. The algorithm parameters are $\delta t=0.0002$ and $P=15$. The dotted vertical lines indicate the times at which the Bragg pulse is applied.} 
\label {braggpulse}
\end{center}
\end {figure}

\begin{figure}[H]
\begin{center}

\includegraphics[scale=1]{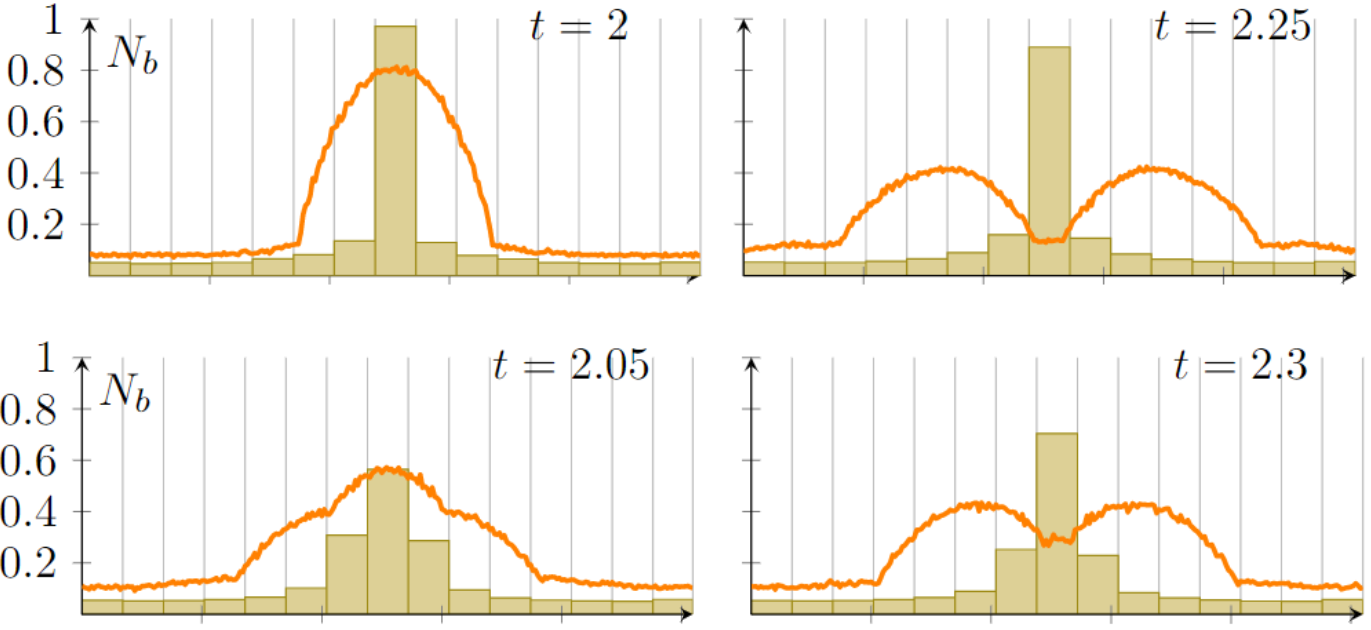}
\includegraphics[scale=1]{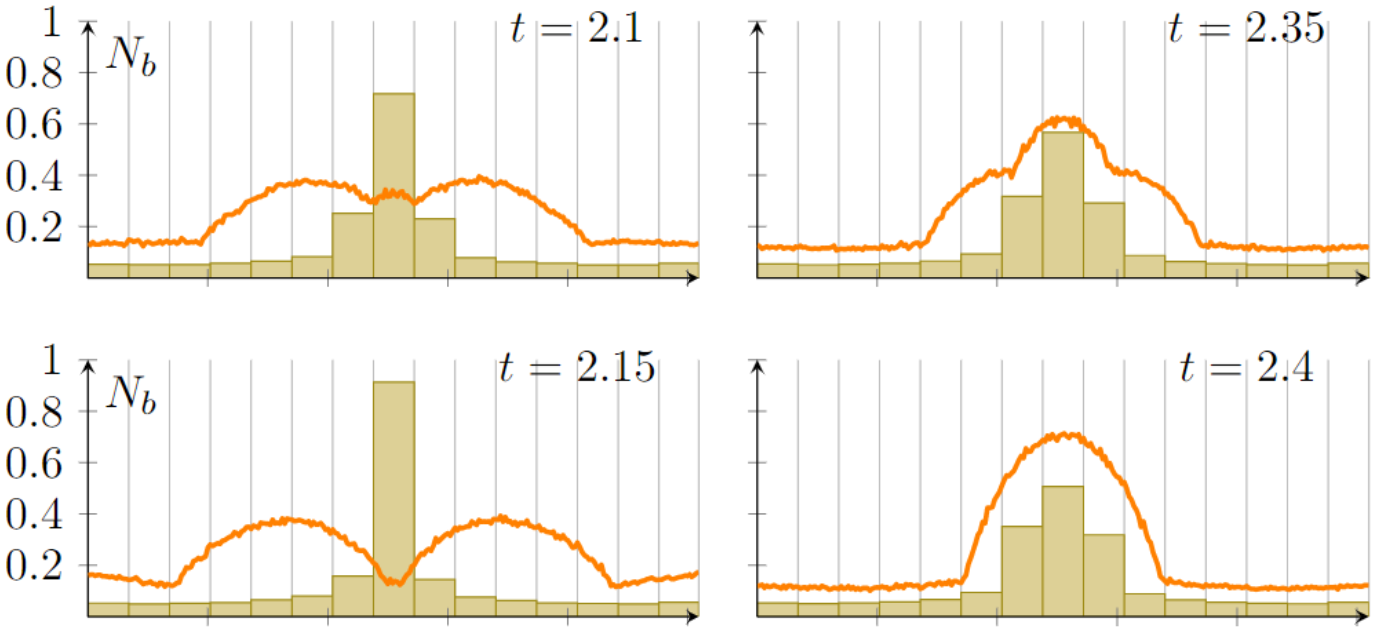}
\includegraphics[scale=1]{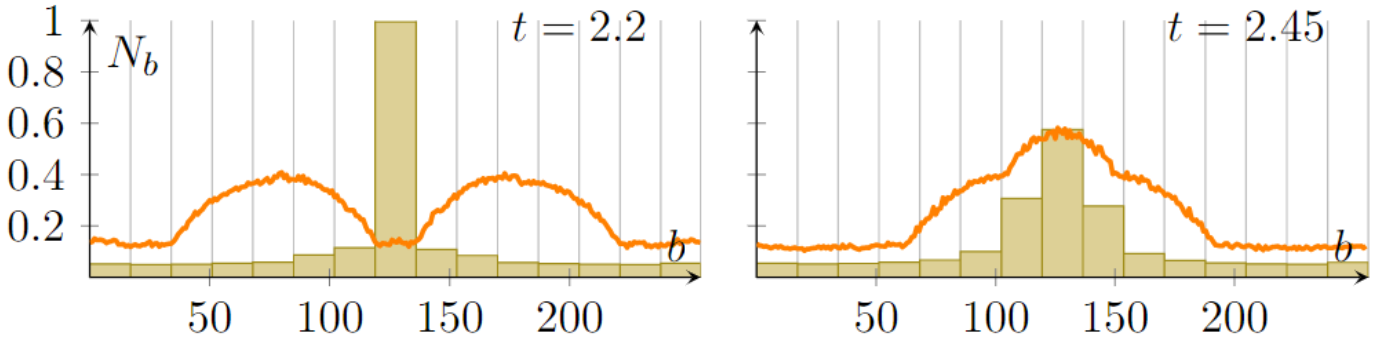}

\caption{\emph{Quantum Newton cradle setup}. Setup described in Section \ref{braggsection} with the harmonic potential, at different times $t$. Orange curve: expectation value of the particle number $N_b$ in each box, as a function of the box index $b$. Green bars: amplitude of each of the $P$ box mode occupation number $\alpha\in K^{(P)}_{\otimes}$, averaged over all the boxes, displayed in increasing order from left to right. For better readability, the vertical scale of the green bars is arbitrary but consistent for different times. The algorithm parameters are $\delta t=0.0001$ and $P=15$. The first plot at $t=2$ is just before the Bragg pulse, all the others correspond to times after the Bragg pulse.} 
\label {braggpulse2}
\end{center}
\end {figure}

\subsection{Convergence towards hydrodynamics \label{hydrodconvsec}}
We now study the convergence of the numerics with the hydrodynamic limit \eqref{hydro} in the quantum Newton cradle setup described in Section \ref{braggsection}. We implement the hydrodynamic limit \eqref{hydro} the following way. If the system is divided into $n$ boxes, we define the rescaled time
\begin{equation}
    \tau=\frac{256 t}{n}-\tau_0\,,
\end{equation}
with the rescaled Bragg pulse instant $\tau_0= \frac{512}{n}$, and the rescaled potential amplitude
\begin{equation}
    \omega^2=\frac{125}{512} n^2\,.
\end{equation}
The numbers are chosen so as to recover those of Section \ref{braggsection} for the value $n=256$. The initial state is always chosen to be one particle in the zero momentum mode of the $n/4$ boxes at the middle of the system, as in Section \ref{braggsection} (there are thus $N=n/4$ particles). The number of boxes $n$ plays the role here of the parameter to scale to reach the hydrodynamic limit \eqref{hydro}, which is obtained in the limit $n\to\infty$. We note that this exactly corresponds to a short-time, high-density limit mentioned below \eqref{hydro}. We measure the expectation value of the particle number at the middle of the system $N_{n/2}$. We show in Fig \ref{hydroconver} the results of the numerics.

\begin{figure}[H]
\begin{center}

\includegraphics[scale=1]{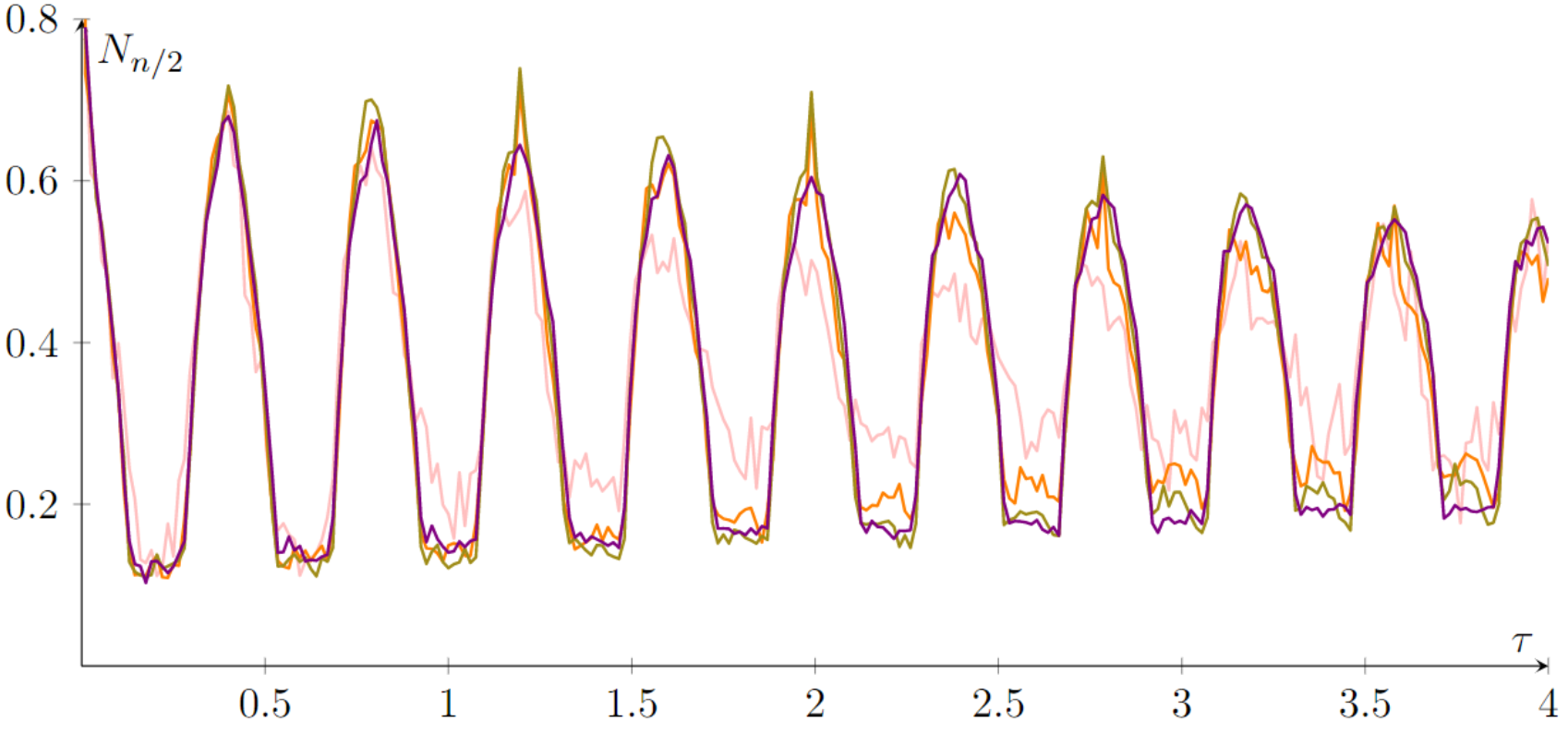}

\caption{\emph{Convergence towards the hydrodynamic limit}. Expectation value of the particle number at the middle of the system $N_{n/2}$ as a function of the rescaled time $\tau$, for the setup described in Section \ref{hydrodconvsec}, with $N=8$ particles ({\color{pink}pink}), $N=32$ particles ({\color{orange}orange}), $N=64$ particles ({\color{olive}green}), $N=128$ particles ({\color{violet}purple}). The algorithm parameters are $\delta t=0.0256/n$ and $P=7$.} 
\label {hydroconver}
\end{center}
\end {figure}

\subsection{Quench from double well to single well potential: convergence with \texorpdfstring{$\delta t$}{Lg} and \texorpdfstring{$P$}{Lg} \label{sec:conv}}
We now present numerical results of the algorithm in a setup analogous to the quantum Newton cradle setup, which is a quench from a double well potential to a single well potential. At time $t=0$, we take the initial wavelet state to be a tensor product of lowest energy eigenstates with $m$ particles in each of the leftmost $n/4$ boxes and rightmost $n/4$ boxes (the total state has thus $nm/2$ particles). We choose this particular initial setup only to ease the first relaxation stage with the double well potential. Then for time $0<t<t_0$ we impose the following potential
\begin{equation}
   V^{(1)}(x)=4\omega^2(-x^2+4x^4)\,, 
\end{equation}
with $x=\tfrac{b-1/2}{n}-\tfrac{1}{2}$ where $b\in\{1,...,n\}$ is the box index and with $\omega^2$ an overall scale. Then for time $t>t_0$ we time evolve the system with the single potential well
\begin{equation}
   V^{(2)}(x)=\omega^2 x^2\,.
\end{equation}
We present numerical tests of the convergence speed of the algorithm with $\delta t$ and $P$. To fix values, with set $L=32$ the size of the system, $n=256$ the number of boxes, $m=1$ the number of particles in each occupied box initially (so $N=128$ particles in total), $V=8000$ the strength of the potential, and $t_0=2$ the duration of the first relaxation stage, for a total running time $t_{\rm fin}=16$. Then we measure $N_{b}$ the number of particles in box $b$ as a function of time $t$, for different algorithm parameters $\delta t$ and $P$. The results are plotted in Fig \ref{convergence} and \ref{convergence2}. We see that for these values of physical parameters, the algorithm parameters $\delta t=0.0005$ and $P=15$ show almost converged curves. The simulation with these parameters takes just a few minutes on a laptop.

\begin{figure}[H]
\begin{center}

\includegraphics[scale=1]{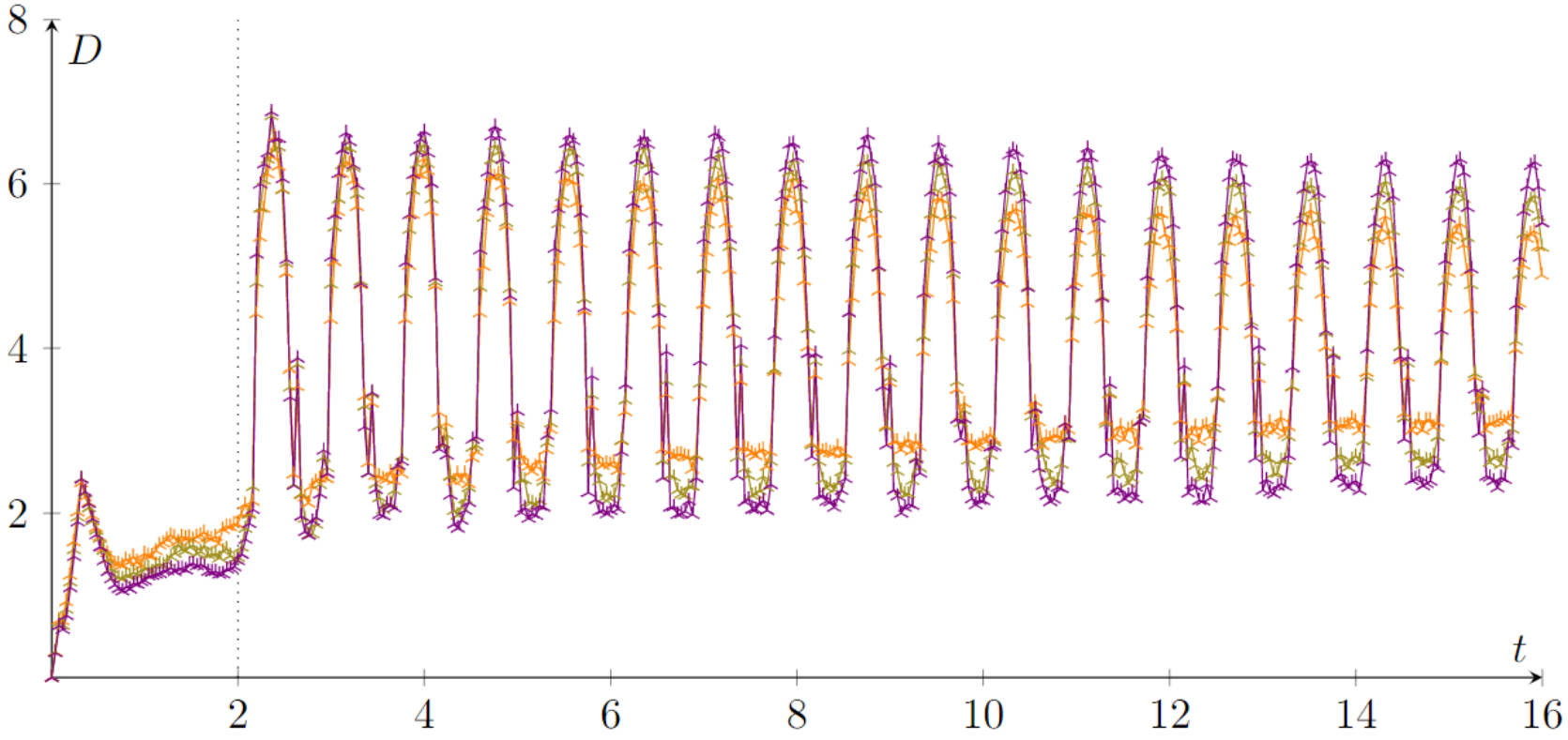}

\caption{\emph{Influence of the time step $\delta t$}. Number of particles per unit length at the middle of the system as a function of time, for the physical parameters quoted in Section \ref{sec:conv}, with $\delta t=0.005$ ({\color{orange}orange}), $\delta t=0.002$ ({\color{olive}green}), $\delta t=0.0005$ ({\color{violet}purple}), all for $P=15$. The dotted vertical lines indicate the moment of the quench from the double well to the single well.} 
\label {convergence}
\end{center}
\end {figure}

\begin{figure}[H]
\begin{center}

\includegraphics[scale=1]{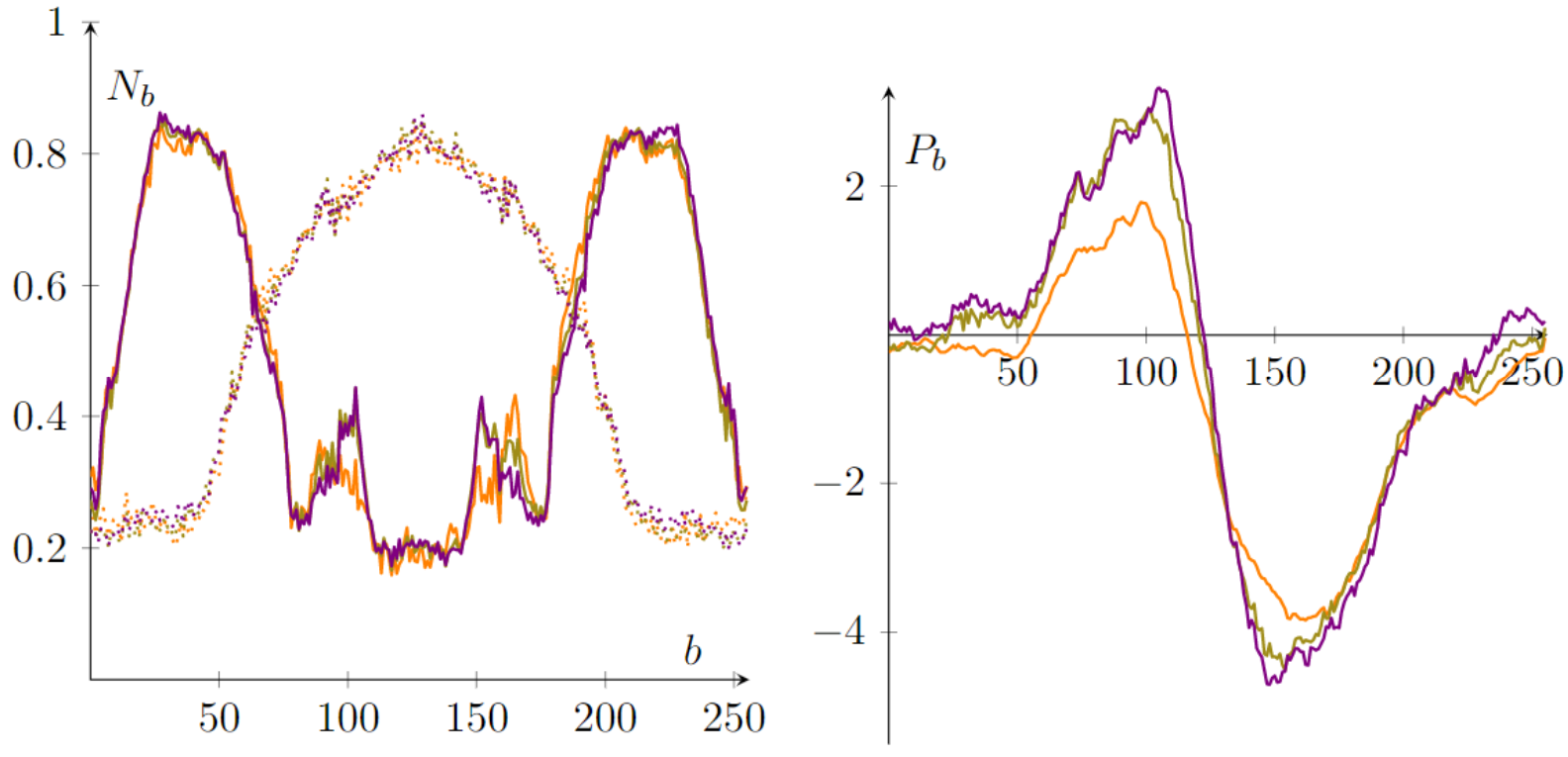}

\caption{\emph{Influence of truncation parameter $P$}. Expectation value of the number of particles (left panel) and of the momentum (right panel) in each box, as a function of the box index $b$, for the physical parameters quoted in Section \ref{sec:conv}. These are snapshots at times $t=2.75$ (continuous line) and $t=3.15$ (dotted) for the left panel, and at time $t=3$ for the right panel. The algorithm parameters are $\delta t=0.0005$ for all curves, with $P=3$ ({\color{orange}orange}), $P=7$ ({\color{olive}green}), $P=15$ ({\color{violet}purple}). The momentum curve is averaged over a time window of size $0.03$ to remove very fast oscillations.} 
\label {convergence2}
\end{center}
\end {figure}

\section{Summary and discussion}
In this paper we studied the emergence of hydrodynamics in a 1D gas of hardcore bosons in inhomogeneous potentials from first principles. We provided an exact implementation of the ubiquitous ``fluid cell" view of hydrodynamics. Specifically, we divided the system into $n$ different boxes and decomposed the out-of-equilibrium wave function of the system in the basis given by tensor products of energy eigenstates on each of the boxes. Such representation is mixed between real space and momentum space, and is composed of small plane waves localized in each box, called ``wavelets". Contrary to a hydrodynamic or coarse-grained description, there is \emph{no} removal of degrees of freedom in this wavelet representation. Because hardcore bosons can be mapped to free fermions, the dynamics in this wavelet basis can be exactly computed using form factor expansions. 

The output of this work is twofold. Firstly, we provide a derivation of (generalized) hydrodynamics for hardcore bosons from first principles, without the assumption of local equilibrium which was systematically required previously. In particular we emphasize that hydrodynamics emerges in a short-time, high-momentum limit. In this limit, the expectation values of conserved charges \emph{do not} require a large number of particles in the initial state to be described by GHD equations. However, local relaxation (that is required to deduce all local correlations from the conserved charges) is obtained only when including a large number of particles to produce a destructive interference of ``non-relaxed terms".  Regarding the cold-atom experiments applications of GHD, this seems more realistic than the Euler limit of large time and large space.  Secondly, we show that the decomposition of the system in this wavelet basis provides an efficient numerical algorithm to simulate the dynamics of hardcore bosons with inhomogeneous potentials. It is indeed a  representation that is mixed in real space (through the division into boxes) and momentum space (through the decomposition into energy eigenstates in each box) that can be implemented efficiently and that gives access to several observables for a large number of particles and for a relatively long time, including the single-particle Green's function, otherwise difficult to compute. 

The results presented in this paper suggest a number of directions. A natural direction would be to generalize this approach to finite coupling $c$ between the bosons. The strategy would be to perform a strong coupling $1/c$ expansion, which has proven efficient before \cite{granet2020systematic,granet2022duality,bertini2022bbgky}. At large finite coupling the bosons can be mapped to fermions with a weak coupling, allowing for perturbative expansions \cite{cheon1999fermion,brand2005dynamic,granet2022duality,granet2022regularization}. Another approach could be to generalize the results of geometric quenches to work directly at finite coupling $c$ \cite{mossel2010geometric}. Other possible directions include using the algorithm we introduced to simulate more closely cold-atom experiments, for example by including atom losses which play a significant role \cite{bouchoule2020effect,alba2022noninteracting,hutsalyuk2021integrability,rosso2022one,bouchoule2021losses}.

\vspace{0.5cm}

{\em \bf Acknowledgments:}
We thank Fabian Essler and Tony Jin for comments on the draft. This work was supported by the Kadanoff Center for Theoretical Physics at University of Chicago, and by the Simons Collaboration on Ultra-Quantum Matter.


\printbibliography

\end{document}